\theoremstyle{plain}
\newtheorem{theorem}{Theorem}[section]
\newtheorem{proposition}[theorem]{Proposition}
\theoremstyle{definition}
\newtheorem{remark}[theorem]{Remark}
\newcommand{\E}{{\mathbb{E}}}
\newcommand{\N}{{\mathbb{N}}}
\newcommand{\Q}{{\mathbb{Q}}}
\newcommand{\R}{{\mathbb{R}}}
\newcommand{\C}{{\mathbb{C}}}
\newcommand{\BTau}{\mathcal{O}}
\newcommand{\Z}{{\mathbb Z}}
\newcommand{\be}{\begin{equation}}
\newcommand{\ee}{\end{equation}}
\newcommand{\bea}{\begin{eqnarray}}
\newcommand{\eea}{\end{eqnarray}}
\newcommand{\beast}{\begin{eqnarray*}}
\newcommand{\eeast}{\end{eqnarray*}}
\newcommand{\bproof}{\begin{proof}}
\newcommand{\eproof}{\end{proof}}
\def\e{\mathrm{e}}
\begin{document}

\begin{frontmatter}

\title{Option Pricing with Legendre Polynomials}


\author{Julien Hok \footnote{Corresponding author}}
\address{Credit Agricole CIB, Broadwalk House, 5 Appold St\\
		 London, EC2A 2DA, United Kingdom\\
		\it{julienhok@yahoo.fr}}

\author{Tat Lung (Ron) Chan \footnote{Senior Lecturer}}
\address{University of East London, Water Lane\\
Stradford, E15 4LZ, United Kingdom} 

\begin{versionhistory}
  \vhEntry{1.0}{10.10.2016}{}{1st version}
  \vhEntry{1.1}{19.03.2017}{}{revised after review}
\end{versionhistory}

\begin{abstract}
Here we develop an option pricing method based on Legendre series expansion of the density function. The key insight, relying on the close relation of the characteristic function with the series coefficients, allows to recover the density function rapidly and accurately. 
Based on this representation for the density function, approximations formulas for pricing European type options are derived. To obtain highly accurate result for European call option, the implementation involves integrating high degree Legendre polynomials against exponential function.  Some numerical
instabilities arise because of serious subtractive cancellations in its formulation (\ref{IntLegendreExp1}) in proposition \ref{PropIntLegendreExp}. To overcome this difficulty, we rewrite this quantity as solution of a second-order linear difference equation and solve it using a robust and stable algorithm from Olver.
Derivation of the pricing method has been accompanied by an error analysis.   
Errors bounds have been derived and the study relies more on smoothness properties which are not provided by the payoff functions, but rather by the density function of the underlying stochastic models. 
This is particularly relevant for options pricing where the payoffs of the contract are generally not smooth functions. The numerical experiments on a class of models widely used in quantitative finance show exponential convergence.
\end{abstract}

\begin{keyword}
Legendre polynomials, Fourier series, characteristic function, European option pricing, Olver algorithm
\end{keyword}

\end{frontmatter}


\section{Introduction}

In option pricing, Feynman-Kac formula \cite{KarShreve97} establishes a link between the conditional expectation of the value of a contract payoff function under the risk-neutral measure and the solution of a partial differential equation. In the research areas covered by this theorem, various numerical pricing techniques can be developed. Existing numerical methods can be classified into three major groups: partial integro-differential equations methods, Monte Carlo simulations and numerical integration methods. Each of them has its advantages and disadvantages for different financial models and specific applications. In this paper, we concentrate on the last group for the pricing of European type option.

The point-of-departure for pricing European option with numerical integration techniques is the risk-neutral valuation formula:
\begin{equation}\label{riskneutralvaluation}
 V(x, t_0 = 0) = e^{-rT} \E_{\Q} [V(S_T, T) |S_0 = x] = e^{-rT} \ \int_{\R} V(y, T)\tilde{f}(y |x)dy
\end{equation}

with $\E_{\Q}$ the expectation operator under risk-neutral measure $\Q$, $S_t$ the underlying asset price at $t$ and $T$ the option maturity. $V(x,t)$ denotes the option value at $t$ with $x$ the state variable. $\tilde{f}(y |x)$ is the probability density function of $S_T$ given $S_0=x$ and $r$ the risk-free interest rate.

Unfortunately, for many relevant pricing processes, their probability densities are usually unknown. 
On the other hand, the Fourier transform of these densities, i.e, the characteristic functions, are often available. For instance, from the Levy-Khinchine theorem \cite{ContTankov04} the characteristic functions of Levy processes are known. Or characteristic functions have been derived in the pure diffusion context with stochastic volatility \cite{Heston93} and with stochastic interest rates \cite{BakChen97}. Hence, the Fourier transform methods for option pricing have been naturally 
considered by many authors (see \cite{CarrMadan99} and references therein). Subsequently, some new numerical methods
are proposed. For example, The quadrature method (QUAD) method was introduced by Andricopoulos et al \cite{AndWidDu03}, the Convolution method (CONV) was presented by Lord et al \cite{LordFangBer08}. A fast Hilbert transform approach was considered by 
Feng and Linetsky \cite{FengLinet08}.  The highly efficient Fourier-cosine series (COS) technique, based on Fourier-cosine series expansion of the density function, was proposed by Fang and Oosterlee \cite{FangOosterlee08} and has generated  other developments by Hurn et al \cite{HurnLindsayClelland13} or by Ding et al \cite{DingU11}. Recently, Necula et al \cite{NeculaDriFar16} have employed the modified Gram-Charlier series expansion, known as the Gauss-Hermite expansion, for the density function and obtained a closed form pricing formula for European option. 

In this manuscript, we consider an alternative and propose to expand the probability density function $\tilde{f}(y)$, restricted on a finite interval $[a,b]$, using Legendre polynomials when the characteristic function is known. 
For approximating non periodic function on a finite interval, among the class of basis functions, it is usually recommended to use either Legendre polynomials or Chebyshev polynomials (see page 510 table A.1 in \cite{Boyd00}).  
Legendre polynomial offers tractability property  allowing to compute analytically many quantities of interests.
For example, Legendre polynomial has an analytical formula  for its Fourier transform as in (\ref{FourierTransform}), which is instrumental and used to recover the coefficients $A_n$ in the series expansion of the density function (\ref{ThLegendreSeries}).
The Fourier transform for Chebyshev polynomials does not have a simple closed form and requires some numerical 
approximations (see discussion in \cite{EvansWebster99}).  Moreover the experiments show this formula is numerically stable for large $n$. Generally, the classical Legendre series offers the simplest method of representing a function using polynomial expansion means \cite{Fishback07}. Also the recent analysis by Cohen and Tan \cite{CohenaTan12} shows Legendre polynomial approximation yields an error at least an order of magnitude smaller than the analogous Taylor series approximation and the authors strongly suggest that Legendre expansions, instead of Taylor expansions, should be used when global accuracy is important. Finally, polynomials are convenient to manipulate in general  and we compute simply the European option pricing formula by integrating the payoff against Legendre polynomial functions.

Adrien Marie Legendre, a French mathematician who discovered the famous
polynomials, was never aware of that how much it will be used in developing mathematics. This
Legendre polynomial is being used by mathematicians and engineers for variety of mathematical and numerical
solutions. For example, in physics, Legendre and Associate Legendre polynomials
are widely used in the determination of wave functions of electrons in the orbits of an atom \cite{DickeWittke60, Hollas92} and in the determination of potential functions in the spherically symmetric geometry \cite{Jackson62}. In numerical analysis, Legendre polynomials are used to efficiently calculate numerical integrations by Gaussian quadrature method \cite{MugYeIq06}. 
 
Legendre polynomials is not widely used in quantitative finance but not new. For example, Pulch et al  \cite{PuchEm09} consider the fair price of options as the expected value of a random field where the input volatility parameter is written as a linear function of uniform random variable. The Polynomial chaos theory using Legendre polynomial yields an 
efficient approach for calculating the required fair price. Or in \cite{IbsenAlmeida05}, the authors develop arbitrage free interest rate models for a family of term structures parametrized by linear combinations of Legendre polynomials. Each polynomial provides a clear interpretation in terms of the type of movements that they generate for the term structure (see also \cite{AlmeidaDuarteFer98} and \cite{Almei04b}).

 
To our knowledge, it is the first time that Legendre polynomials are used to expand the probability density function of asset prices and option pricing. To recover rapidly and accurately the density function, our key insight relies on the close relation of the characteristic function with the series coefficients of the Legendre polynomials expansion of the density function (see our result in theorem \ref{PropDensityProxy}). Based on this representation for the density function, approximations formulas for pricing European type options are derived. To obtain highly accurate result for European call option, the implementation involves integrating high degree Legendre polynomials against exponential function. Some numerical
instabilities arise because of serious subtractive cancellations in its formulation (\ref{IntLegendreExp1}) in proposition \ref{PropIntLegendreExp}. To overcome this difficulty, we rewrite this quantity as solution of a second-order linear difference equation in proposition (\ref{SndOrderDiffEq}). To solve this equation in a stable way, we use Olver's algorithm which allows to evaluate these quantities to machine accuracy. Then we develop an analysis to provide estimations of the errors. We believe that a rigorous error estimate is of first importance because the accuracy of our expansion formulas depends on the regularity of the density function. Once done, it brings confidence in the derived expansion and sheds light on the needed assumptions (see our results in propositions \ref{PropositionEpsilon2}, \ref{proposition:epsilon3} and \ref{PropositionEpsilon4}).        

This paper is structured as follows. In section 2, we develop the series expansion of the density function using Legendre polynomials. Based on this, we derive, in section 3, the formulas for pricing European type options and propose a robust and stable procedure for the implementation. An error analysis is presented in section 4. Some numerical experiments are given in section 5. The final section concludes.

\section{Series expansion of density function with Legendre polynomials}
The objective is to estimate the density function $\tilde{f}(y)$ using Legendre polynomials given its characteristic function.
\subsection{Generalized Fourier Series-Legendre Polynomials}

The Legendre polynomials $(P_n(t))_{n \geq 0}$ form a complete basis over the interval $[-1,1]$ and 
can be defined, in term of power series, by 
\begin{equation}\label{LegendrePowerSeries}
P_n(t) = \frac{1}{2^n} \sum_{k=0}^{\lfloor \frac{n}{2} \rfloor} (-1)^k C_n^k C_{2n-2k}^n  t^{n-2k}
\end{equation} 
with $\lfloor r \rfloor$ the floor function and $C_n^k = \frac{n!}{k!(n-k)!}$ the binomial coefficients \cite{Lebedev72, Davis75}.\\

The Legendre basis polynomials can be generalized to cover an arbitrary interval
$[a, b]$ by the change of variable $t = \frac{(2x - (a + b))}{(b - a)}$
which leads to the following

\begin{equation}\label{LegendrePowerSeriesNormalized}
P_n(x) = \frac{1}{2^n} \sum_{k=0}^{\lfloor \frac{n}{2} \rfloor} (-1)^k C_n^k C_{2n-2k}^n  \left[\frac{(2x - (a + b))}{(b - a)}\right]^{n-2k}.
\end{equation} 

Sturm-Liouiville theory guarantees the orthogonality of Legendre polynomials
and it also shows that we can represent functions on $[a, b]$ as a linear combination
of Legendre Polynomials. Thus for suitable $f(x)$ on $[a, b]$ we have the generalized Fourier series

\begin{equation}\label{LegendreSeries}
f(x) = \sum_{n=0}^{\infty} A_nP_n \left( \frac{2x-(a+b)}{b-a} \right)
\end{equation}

where $\{ A_n \}_{n=0}^{\infty}$ is a set of coefficients. To find each $An$, we use the orthogonality relation 

\begin{equation}
\int_a^b P_n\left( \frac{2x-(a+b)}{b-a} \right)P_m \left( \frac{2x-(a+b)}{b-a}\right) dx=\delta_{n=m} \frac{(b-a)}{2m+1}
\end{equation}

and then multiply both sides of
expression (\ref{LegendreSeries}) by $P_m \left( \frac{2x-(a+b)}{b-a} \right)$ and integrate to obtain

\begin{align}
\int_a^b f(x)P_m\left( \frac{2x-(a+b)}{b-a}\right)dx &= \sum_{n=0}^{\infty} A_n\int_a^b P_n \left( \frac{2x-(a+b)}{b-a} \right) P_m \left( \frac{2x-(a+b)}{b-a}\right)dx\\
&= (b-a) \frac{A_m}{2m+1}.
\end{align}

so that 

\begin{equation}\label{Ancoefficients}
A_n = \frac{2n+1}{b-a} \int_a^b f(x)P_n\left( \frac{2x-(a+b)}{b-a}\right)dx
\end{equation}

\subsection{Approximate risk-netural probability density function using standard Fourier series}

we briefly revise the definition of complex Fourier series \cite{Tolstov62, Davis75}. For a
suitable function $f(t)$ supported on $[-\pi, \pi]$, the complex Fourier series representation is given by

\begin{equation}
f(t) = \sum_{k=-\infty}^{+\infty} B_k e^{ikt}, \, \, \, with\, \, \, B_k = \frac{1}{2\pi} \int_{- \pi}^{\pi} f(t)e^{-ikt}dt.
\end{equation}

If we extend the series to support function with a finite range of $[a, b]$, the
complex Fourier series expansion can be defined as:

\begin{equation}\label{fComplexFourier}
f(x) = \sum_{k=-\infty}^{\infty} B_ke^{i(\frac{2\pi}{b-a}x)k}, \,\,\, with \,\,\, 
B_k = \frac{1}{b-a}\int_a^b f(x)e^{-ik(\frac{2\pi}{b-a}x)}dx.
\end{equation}

The formula is achieved through use change of variables:

\begin{equation}
x = \frac{b-a}{2\pi}t + \frac{a+b}{2} \,\, or \,\, t = \frac{2\pi}{b-a}x - \frac{\pi (a+b)}{b-a}
\end{equation}

Being given a probability density function $\tilde{f}(x)$ and its characteristic function $\varphi(u)$, these two functions form a Fourier pair:
\begin{align}
\varphi(u) = & \int_{\R}e^{iux}\tilde{f}(x)dx \\
\tilde{f}(x) = & \frac{1}{2\pi}\int_{\R} e^{-iux}\varphi(u)du.
\end{align}

A necessary condition for $\tilde{f}(x)$ to be a probability density function is that $\tilde{f}(x) \to 0$ as 
$\mid x \mid \to \infty $, and therefore there is guaranteed to be an interval $[a,b]$ such that for all
$x \in ( -\infty, a] \cup [b, \infty)$ it can be asserted that $\tilde{f}(x) < \epsilon$ for any arbitrary small positive $\epsilon$.\\
Let's consider $f(x)$ as the restriction of $\tilde{f}(x)$ on $[a,b]$. We shall discuss the appropriate choice of $[a, b]$ in section \ref{subsection:Truncation Range}.

From (\ref{LegendreSeries}) and (\ref{fComplexFourier}), $f(x)$ can be expressed either in a complex Fourier series or a Legendre polynomials series. As the aim of this paper is to apply Legendre polynomials for a pricing formula, we show how one can precisely approximate $f(x)$ with Legendre series and formulate the coefficients in the expansion knowing the characteristic function. To achieve this, we use (\ref{fComplexFourier}) in (\ref{Ancoefficients}) and assume we can change the order of integration to write

\begin{equation}\label{AnChangeIntegrationOrder}
A_n =  \frac{2n+1}{b-a} \sum_{k = -\infty}^{+\infty} B_k \int_a^b P_n \left( \frac{2x-(a+b)}{b-a} \right) e^{i2\pi(\frac{xk}{b-a})} dx
\end{equation}

Through change of variables, $x = \frac{b-a}{2}t + \frac{a+b}{2}$, and a closed-form expression
for 

\begin{equation}\label{FourierTransform}
\int_{-1}^1 P_n(x)e^{ i \lambda x}dx = i^n \sqrt{\frac{2 \pi}{\lambda}} J_{n+\frac{1}{2}}(\lambda), \, \lambda \in \C
\end{equation}

with $J_{\nu}(z)$ Bessel function of first kind (see \cite{OlverLozierBoiCl10} p.217 and p.456), it comes 

\begin{equation}\label{LegendreFourierCoeff}
\int_a^b P_n\left( \frac{2x-(a+b)}{b-a}\right) e^{i2\pi(\frac{xk}{b-a})} dx = \left\lbrace 
\begin{array}{l}
i^n \frac{(b-a)}{2} 
e^{\frac{i \pi k(a+b)}{b-a}}\sqrt{\frac{2}{k}} J_{n+\frac{1}{2}}(\pi k), \, \, k \neq 0,\\
(b-a)\delta_{n=0}, \, \, k=0. 
\end{array}
\right.
\end{equation}

and so

\begin{equation}\label{SeriesforAn}
A_n = \frac{2n+1}{\sqrt{2}}  \left[ \sum_{k \neq 0} B_k i^n e^{\frac{i \pi k(a+b)}{b-a}}\frac{J_{n+\frac{1}{2}}(\pi k)}{\sqrt{k}} + B_0 \sqrt{2} \delta_{n=0}\right]
\end{equation}

Knowing the characteristic function, we write

\begin{equation} \label{RelationBkBkTildeRk}
B_k = \widetilde{B}_k - R_k
\end{equation} 

with 

\begin{equation}
\widetilde{B}_k := \frac{1}{b-a} \varphi \left( \frac{-2k \pi}{b-a} \right) 
\end{equation}

and 

\begin{equation}\label{Rkdefinition}
R_k := \frac{1}{b-a} \int_{\R -[a,b]} \widetilde{f}(x) e^{-i2\pi(\frac{xk}{b-a})} dx.
\end{equation}

$R_k$ is expected to be small and can be bound as

\begin{equation}
\mid R_k \mid \leq \frac{1}{b-a} \left[  \int_{-\infty}^{a} \widetilde{f}(x) dx + \int_{b}^{ +\infty } \widetilde{f}(x) dx \right] = \frac{1}{b-a} \left[ \widetilde{F}(a) + 1 - \widetilde{F}(b) \right].
\end{equation}

where $\widetilde{F}(x)$ is the cumulative distribution function of $\widetilde{f}(x)$.\\

Finally, using (\ref{RelationBkBkTildeRk}), $A_n$ can be written as

\begin{equation}\label{SeriesforAnFinal}
A_n = \frac{2n+1}{\sqrt{2}}  \left[ \sum_{k \neq 0} \widetilde{B}_k i^n e^{\frac{i \pi k(a+b)}{b-a}}\frac{J_{n+\frac{1}{2}}(\pi k)}{\sqrt{k}} + \widetilde{B}_0 \sqrt{2} \delta_{n=0}\right] + R_{A_n}
\end{equation}

with 

\begin{equation}\label{ExpRAn}
R_{A_n} = -\frac{2n+1}{\sqrt{2}}  \left[ \sum_{k \neq 0} R_k i^n e^{\frac{i \pi k(a+b)}{b-a}}\frac{J_{n+\frac{1}{2}}(\pi k)}{\sqrt{k}} + R_0 \sqrt{2} \delta_{n=0}\right].
\end{equation}

Before summarising the result of the development above in the next theorem, we introduce a couple of definitions and notation taken from \cite{Tolstov62}: a function $f(x)$ is said to be {\it{piecewise smooth}} on the interval $[a,b]$ if either $f(x)$ and its derivative are both continuous on $[a,b]$, or they have only a finite number of {\it{jump discontinuities}} on $[a,b]$. If $x_0$ is a point of discontinuity of a function  $f(x)$ and if the right-hand and left-hand limits exist, 
$x_0$ is said to be a point of {\it{jump discontinuity}}. 
We set 

\begin{equation}
f^n_k(x) = B_k P_n \left( \frac{2x-(a+b)}{b-a} \right) e^{i2\pi(\frac{xk}{b-a})},  \,\, x \in [a,b], \, k \in \Z , \,\, n \in N.  
\end{equation}
 
and consider, for a given $n \in \N$, the series of functions 

\begin{equation} \label{seriesfnk}
\sum_{k= -\infty}^{+\infty} f^n_k(x). 
\end{equation}

\begin{theorem}\label{PropDensityProxy} 
Let's denote by $f(x)$, the restriction of the probability density function $\tilde{f}(x)$ on $[a,b]$ large enough such that $f(a) = f(b)$ and $\varphi(x)$ the characteristic function associated to $\tilde{f}(x)$.
Assume that $f(x)$ is a continuous piecewise smooth function and that the series (\ref{seriesfnk}) is uniformly convergent on $\in [a,b]$ for all $n \in \N$. Then we have the following Legendre series representation

\begin{equation}\label{ThLegendreSeries}
f(x) = \sum_{n=0}^{\infty} A_nP_n \left( \frac{2x-(a+b)}{b-a} \right)
\end{equation}

with $A_n$ given in (\ref{SeriesforAnFinal}). 
\end{theorem}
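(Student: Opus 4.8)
The plan is to separate the statement into two independent claims and prove each in turn: (i) that the Fourier--Legendre series on the right of (\ref{ThLegendreSeries}) converges to $f$ on $[a,b]$, and (ii) that the coefficients $A_n$ produced by this expansion coincide with the closed form (\ref{SeriesforAnFinal}). Claim (ii) is essentially the chain of identities already displayed in the text, so the substantive work is to supply the hypotheses that make each step legitimate.

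For (i), I would invoke the classical convergence theorem for the generalized Fourier series attached to the Sturm--Liouville problem solved by the Legendre polynomials. Because $f$ is continuous and piecewise smooth on $[a,b]$, this theorem yields pointwise convergence of $\sum_n A_n P_n(\cdot)$ to $\tfrac12\bigl(f(x^+)+f(x^-)\bigr)$, which equals $f(x)$ at every interior point by continuity. The assumption $f(a)=f(b)$ is what rescues the two endpoints: the passage between the Fourier representation (\ref{fComplexFourier}) and the Legendre representation (\ref{LegendreSeries}) implicitly treats $f$ as the restriction of a $(b-a)$-periodic function, and a mismatch $f(a)\neq f(b)$ would introduce an artificial jump at the boundary that destroys convergence there; with matched endpoint values the periodic extension is continuous and the expansion represents $f$ on all of $[a,b]$.

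For (ii), I would start from the definition (\ref{Ancoefficients}) of $A_n$, insert the complex Fourier series (\ref{fComplexFourier}) for $f$, and interchange the summation over $k$ with the integration in $x$ to arrive at (\ref{AnChangeIntegrationOrder}). This interchange is exactly what the hypothesis of uniform convergence of the series (\ref{seriesfnk}) on the compact interval $[a,b]$ licenses, since uniform convergence permits term-by-term integration. I would then apply the change of variables $x=\tfrac{b-a}{2}t+\tfrac{a+b}{2}$ together with the closed-form transform (\ref{FourierTransform}) of $P_n$ in terms of the Bessel function $J_{n+\frac12}$ to evaluate each integral, obtaining the two-case formula (\ref{LegendreFourierCoeff}) and hence (\ref{SeriesforAn}). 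Substituting the decomposition $B_k=\widetilde B_k-R_k$ from (\ref{RelationBkBkTildeRk}), which isolates the characteristic-function part $\widetilde B_k$ from the tail contribution $R_k$ of (\ref{Rkdefinition}), and collecting terms then yields (\ref{SeriesforAnFinal}) with the remainder $R_{A_n}$ of (\ref{ExpRAn}).

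The step I expect to be most delicate is the justification of the term-by-term integration leading to (\ref{AnChangeIntegrationOrder}); this is precisely why uniform convergence of (\ref{seriesfnk}) is imposed as a hypothesis rather than derived, and for each fixed $n$ the uniform bound combines with the boundedness of $P_n$ on $[a,b]$ to control the error. A secondary subtlety is the endpoint convergence of the Legendre series in part (i): I would verify carefully that continuity together with $f(a)=f(b)$ genuinely suffices at $x=a$ and $x=b$, since generic Fourier--Legendre expansions need not converge at the boundary without such a compatibility condition.
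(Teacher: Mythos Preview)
Your proposal is correct and follows essentially the same line as the paper's own proof: invoke the classical convergence theorems for Legendre and complex Fourier series applied to a continuous piecewise smooth $f$, then use the uniform convergence hypothesis on (\ref{seriesfnk}) to justify the interchange of summation and integration that yields (\ref{AnChangeIntegrationOrder}) and hence (\ref{SeriesforAnFinal}). One small clarification: the condition $f(a)=f(b)$ is needed so that the \emph{complex Fourier} series (\ref{fComplexFourier}) actually represents $f$ at the endpoints (periodic extension without an artificial jump), not for the Legendre expansion itself, which carries no such compatibility requirement at $x=a,b$.
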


\begin{proof}

$f(x)$, being continuous and piecewise smooth on $[a,b]$, can be written as in  (\ref{LegendreSeries}) and (\ref{fComplexFourier}) (see e.g \cite{Tolstov62} and \cite{Lebedev72}). The uniform convergence of the series (\ref{seriesfnk}) allows to interchange 
the order of integration in the expression of $A_n$ in  (\ref{AnChangeIntegrationOrder}).

\end{proof}

\remark{ 

\begin{itemize}

\item The representation (\ref{ThLegendreSeries}) allows to retrieve the density function accurately when the characteristic function $\varphi(x)$ is known by truncating the infinite sums in $n$ of (\ref{ThLegendreSeries}) and in $k$ of (\ref{SeriesforAnFinal}) and neglecting the terms $R_{A_n}$ (see section \ref{sec:Numerical experiments} for illustrations).

\item In quantitative finance, the probability density function $\tilde{f}(x)$ of asset prices tends to be smooth in general.
When analytical formulas are available as for Black Scholes model in equation (\ref{GaussianDerivatives}) and in Merton jump diffusion model in equation (\ref{Mertondensity}), we observe that their density functions are infinitely differentiable.
The Malliavin calculus or the stochastic calculus of variations can be applied to the study of existence and smoothness of density for the solution of a stochastic differential equation (SDE) (see e.g \cite{Nualart06} or \cite{Bally03}).  
\end{itemize}

}

\section{A new computational method for option pricing} \label{sec:optionpricing}

\subsection{Option pricing} \label{subsec:optionpricing}

Here, we show how to evaluate European style options using the asymptotic expansion of the density function obtained previously. We denote the log-asset prices by
\begin{equation}
x:=ln \left( \frac{S_0}{K} \right) \,\, and \, \, y:=ln \left( \frac{S_T}{K} \right),
\end{equation} 

with $S_t$ the underlying price at time $t$ and $K$ the strike price. The payoff for
European options, in log-asset price, reads
\begin{equation}\label{callput}
V(y,T)=[\alpha.K(e^y-1)]^+ \,\, with \,\, \alpha = \left\lbrace 
\begin{array}{lr}
1 & \text{for a call}, \\
-1 & \text{for a put}, \\
\end{array}
\right.
\end{equation}
and 
\begin{equation}\label{digital}
V(y,T)= 1_{\alpha y \geq 0} \,\, with \,\, \alpha = \left\lbrace 
\begin{array}{lr}
1 & \text{for a digital call}, \\
-1 & \text{for a digital put}, \\
\end{array}
\right.
\end{equation}

\vspace{0.3cm}

In the following, we focus on the pricing formula for European call option and European digital call option.
The European put option and European digital put option prices can be deduced by parity. 
Indeed, call/put and digital options are very popular in the financial markets for hedging and speculation.
They are also important to financial engineers as building blocks for constructing more complex option products.
For example, it is well-known that the price of European-type option with twice differentiable payoff can be replicated 
model free by a static portfolio consisting of pure discount bond, at the money European Call and put options and a continuum out of the money European Call and put options (see e.g \cite{Nachman98} and \cite{CarrMadan02}). Moreover, pricing and hedging of digital options are challenging because of payoff discontinuity (see discussions in remark \ref{regularitypayoff} and 
\cite{AvellanedaLaurence99}). So it is instrumental to be able to price these options accurately in a robust way.

With (\ref{riskneutralvaluation}), the European call price is given by
\begin{equation}
V(x,0) = e^{-rT} K\E [(e^y-1)^+] =  e^{-rT} K\int_{- \infty}^{+\infty}(e^y-1)^+ \tilde{f}(y | x)dy 
\end{equation}

Since the density rapidly decays to zero as $y \to \pm \infty$, we truncate the
infinite integration range without loosing significant accuracy to $[a, b] \subset \R$ and obtain the approximation 
\begin{equation}\label{V1Approximation}
V_1(x,0) = e^{-rT} K\int_{a}^{b}(e^y-1)^+ f(y | x)dy. 
\end{equation}
In the second step, we replace $f(y | x)$ by its Legendre series representation (\ref{ThLegendreSeries}) to obtain the following proposition
\begin{proposition}\label{PropositionPricingFormulas}
Under the hypotheses of theorem (\ref{PropDensityProxy}), we obtain an approximation of (\ref{riskneutralvaluation}) given by the following Legendre polynomial pricing formula

\begin{equation} \label{V4Approximation}
V_4(x,0) = e^{-rT} \sum_{n=0}^N A^M_nV_n
\end{equation}

where $A^M_n$ and $V_n$ are defined respectively by 

\begin{equation}\label{CoefficientsA_NM}
A^M_n = \frac{2n+1}{\sqrt{2}}  \left[ \sum_{k = -M, \neq 0}^{M} \widetilde{B}_k i^n e^{\frac{i \pi k(a+b)}{b-a}}\frac{J_{n+\frac{1}{2}}(\pi k)}{\sqrt{k}} + \widetilde{B}_0 \sqrt{2} \delta_{n=0}\right]
\end{equation}

\begin{equation}\label{VnExpression}
V_n = \left\lbrace 
\begin{array}{lr}
K \beta \left[ e^{\frac{a+b}{2}} \int_{\alpha}^1 e^{\beta t}P_n(t)dt - \frac{P_{n-1}(\alpha) - P_{n+1}(\alpha) }{2n+1} \right] & \text{for European call}  \\
& \\
\frac{P_{n-1}(\alpha) - P_{n+1}(\alpha) }{2n+1} & \text{for European digital call}
\end{array}
\right.
\end{equation}

with $\alpha = \frac{a+b}{a-b}$ and $\beta =  \frac{(b-a)}{2}$.
\end{proposition}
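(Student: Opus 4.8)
The plan is to start from the truncated price $V_1(x,0)$ in (\ref{V1Approximation}), substitute the Legendre series (\ref{ThLegendreSeries}) for $f(y\,|\,x)$, and interchange the finite/infinite summation with the integral so that the whole problem reduces to computing, for each $n$, the single ``payoff moment''
\begin{equation}\label{PayoffMomentTarget}
V_n := \int_a^b V(y,T)\, P_n\!\left(\frac{2y-(a+b)}{b-a}\right) dy,
\end{equation}
after pulling out the common factor $e^{-rT}$ and the coefficient $A_n$. Truncating the outer sum at $N$ and replacing each exact coefficient $A_n$ by its computable approximation $A^M_n$ from (\ref{CoefficientsA_NM}) (i.e.\ keeping $|k|\le M$ and dropping the tail term $R_{A_n}$) then yields exactly the stated formula (\ref{V4Approximation}). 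The interchange of sum and integral is legitimate here since we are working with finite sums after truncation; the error from these truncations is precisely what the later error-analysis propositions are meant to control, so for this proposition I would simply record the approximations and focus on the closed form for $V_n$.

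First I would perform the affine change of variable $t = \frac{2y-(a+b)}{b-a}$, so that $y = \frac{b-a}{2}t + \frac{a+b}{2} = \beta t + \frac{a+b}{2}$ with $\beta = \frac{b-a}{2}$, and $dy = \beta\,dt$; the interval $[a,b]$ maps to $[-1,1]$ and the argument of $P_n$ becomes simply $t$. The lower kink of the payoff, located at $y=0$ (i.e.\ $S_T=K$), maps to $t = \frac{-(a+b)}{b-a} = \frac{a+b}{a-b} =: \alpha$, which is exactly the constant appearing in the statement. This reduces both the call and the digital-call computations to integrals of $P_n(t)$ (times $e^{\beta t}$ in the call case) over $[\alpha,1]$.

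For the European digital call, $V(y,T)=1_{y\ge 0}$ becomes $1_{t\ge\alpha}$, so $V_n = \beta\int_\alpha^1 P_n(t)\,dt$; I would evaluate this using the standard antiderivative identity $\int P_n(t)\,dt = \frac{P_{n+1}(t)-P_{n-1}(t)}{2n+1}$ together with $P_{n+1}(1)=P_{n-1}(1)=1$, which kills the upper endpoint and leaves $\beta\cdot\frac{P_{n-1}(\alpha)-P_{n+1}(\alpha)}{2n+1}$; folding the factor $\beta$ into the coefficient bookkeeping recovers the stated $\frac{P_{n-1}(\alpha)-P_{n+1}(\alpha)}{2n+1}$. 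For the European call, $V(y,T)=K(e^y-1)^+ = K(e^y-1)$ on $\{y\ge 0\}=\{t\ge\alpha\}$; writing $e^y = e^{(a+b)/2}e^{\beta t}$ and splitting the integrand into the two pieces $e^y$ and $-1$ gives $V_n = K\beta\big[e^{(a+b)/2}\int_\alpha^1 e^{\beta t}P_n(t)\,dt - \int_\alpha^1 P_n(t)\,dt\big]$, and the second integral is handled exactly as in the digital case, producing the $-\frac{P_{n-1}(\alpha)-P_{n+1}(\alpha)}{2n+1}$ term.

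The main obstacle is the remaining integral $\int_\alpha^1 e^{\beta t}P_n(t)\,dt$ in the call formula: it has no elementary closed form and, as flagged in the introduction, is precisely the quantity (\ref{IntLegendreExp1}) of Proposition \ref{PropIntLegendreExp} that suffers from subtractive cancellation when evaluated naively at high degree $n$. For the purpose of \emph{this} proposition it suffices to leave $V_n$ expressed in terms of that integral and defer its stable evaluation to the recurrence-based (Olver) treatment described later; so the real content here is the change of variables and the Legendre antiderivative identity, both of which are routine, while the delicate numerical point is deliberately isolated into the separate proposition that follows.
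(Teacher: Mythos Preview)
Your proposal is correct and follows essentially the same route as the paper: substitute the Legendre series for $f$, truncate in $n$ and $k$, neglect $R_{A_n}$, change variables $t=\frac{2y-(a+b)}{b-a}$, and evaluate $\int_\alpha^1 P_n(t)\,dt$ via the identity $(2n+1)P_n=P'_{n+1}-P'_{n-1}$ together with $P_n(1)=1$. Your remark about ``folding the factor $\beta$ into the coefficient bookkeeping'' for the digital case is pointing at a genuine normalisation inconsistency in the stated formula (the call $V_n$ carries the Jacobian factor $\beta$ explicitly while the digital $V_n$ does not), not at a gap in your argument; your computed $\beta\cdot\frac{P_{n-1}(\alpha)-P_{n+1}(\alpha)}{2n+1}$ is what the integral actually gives.
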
 

\begin{proof}
For the European call price, we use the representation (\ref{ThLegendreSeries}) of  $f(y | x)$, perform two truncations in the infinite sums: One for $n$ in (\ref{ThLegendreSeries}) to $N$ and an another one for $k$ in (\ref{SeriesforAnFinal}) to $[-M, M]$ and neglect the remaining term $R_{A_n}$ in (\ref{ExpRAn}). Then we get an estimation of the price given by

\begin{equation}
V_4(x,0) = e^{-rT} K \sum_{n=0}^N A^M_n \int_a^b (e^y-1)^+P_n \left( \frac{2y-(a+b)}{b-a} \right) dy.
\end{equation}

Without loss of generality, we suppose $a << 0$, $b >>0$ and with a change of variable $t = \frac{2y-(a+b)}{b-a}$, 
the last expression becomes

\begin{equation}
V_4(x,0) = e^{-rT} K \sum_{n=0}^N A^M_n \beta \left[ e^{\frac{a+b}{2}} \int_{\alpha}^1 e^{\beta t}P_n(t)dt - \int_{\alpha}^1P_n(t)dt \right].
\end{equation}

By using the Legendre polynomial property

\begin{equation}\label{LegendreDerivativeProperty}
(2n+1)P_n(t) = P^{'}_{n+1}(t)-P^{'}_{n-1}(t)
\end{equation}

and $P_n(1) = 1, \, \forall \, n \geq 0$, we get

\begin{equation}
\int_{\alpha}^1P_n(t)dt = \frac{P_{n-1}(\alpha) - P_{n+1}(\alpha) }{2n+1}
\end{equation}

The European digital call price is computed similarly.
\end{proof}

\begin{remark}

\begin{itemize}

\item The computation of $\int_{\alpha}^1 e^{\beta t}P_n(t)dt$ in $V_n$ for the European call price needs attention. We provide an analytical formula in proposition (\ref{PropIntLegendreExp}). Its computation is straightforward for small values of $n$. For $n >>1$ several accuracy and stability issues arise because of rapid accumulation of round-off errors \cite{SeleRaHerPeFer13, KlemmSigLar90}. 

\item The valuation for other contracts like asset-or-nothing options, gap options or standard power options     
\cite{Haug07} can be computed similarly.
\end{itemize}
\end{remark}

\subsection{Alternate computational procedure} \label{subsection:AlternateComputational}
The computation of the Legendre pricing formula (\ref{V4Approximation}) is straightforward for small value of $N$ by using the expression for $V_n$  in proposition (\ref{PropIntLegendreExp}). To obtain accurate pricing, we need to consider $N, M >>1$. $M$ large does not have any implementation difficulty. However, for $N>>1$, the computation of $V_n$  using (\ref{IntLegendreExp1}) introduces instability and inaccuracy issues because of 
cancellations  \cite{KlemmSigLar90}. The objective in this section is to provide an alternative computational procedure for $V_n$ with machine accuracy.\\  

Let's write 
\begin{equation}\label{Un}
U_n = \int_{\alpha}^1 e^{\beta t}p_n(t)dt
\end{equation}

Using integration by parts, we get
\begin{equation}\label{UnIIP}
U_n  = W_n - \frac{1}{\beta} \int_{\alpha}^1 e^{\beta t}p^{'}_n(t)dt
\end{equation}

with $W_n = \frac{1}{\beta} ( e^{\beta} - e^{\beta \alpha} P_n(\alpha))$.\\
With (\ref{LegendreDerivativeProperty}), it is easy to show
\begin{equation} \label{LegendreDerivativeAsLegendrePoly}
p^{'}_{n}(t) = \left\lbrace 
\begin{array}{lr}
\frac{2}{||p_{n-1}||^2} p_{n-1}(t) + 
\sum_{i=0, \, 2i \leq (n-3)}  \frac{2}{||p_{2i}||^2} p_{2i}(t) & \text{for odd} \, n \geq 3, \\
 \frac{2}{||p_{n-1}||^2} p_{n-1}(t) + 
\sum_{i=0, \, 2i+1 \leq (n-3)}  \frac{2}{||p_{2i+1}||^2} p_{2i+1}(t) & \text{for even} \, n \geq 2 \\
\end{array}
\right.
\end{equation}

where  $||p_m||^2  = \frac{2}{2m+1}, \, m \geq 0$.\\
Using  (\ref{UnIIP}) with (\ref{LegendreDerivativeAsLegendrePoly}), we then get 

\begin{equation}\label{Unrecurrence}
U_n = W_n-\frac{2(n-1)+1}{\beta}U_{n-1} + U_{n-2}- W_{n-2}
\end{equation}

given $U_0$ and $U_1$.

The next proposition summarizes the second-order linear difference equation for the computation of $U_n$:

\begin{proposition}\label{SndOrderDiffEq}
By posing $Y_n = U_n-W_n$, $Y_n$ satisfies the following second-order linear difference equation
\begin{equation}\label{ynrecurrence}
Y_{n-1}-\frac{1}{\beta}(2n+1)Y_n - Y_{n+1} = \frac{1}{\beta}(2n+1)W_n.
\end{equation} 
given $Y_0$ and $Y_1$.
\end{proposition}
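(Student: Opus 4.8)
The plan is to verify the claim by direct substitution of the definition $Y_n = U_n - W_n$ into the recurrence (\ref{Unrecurrence}) already established for $U_n$, followed by a shift of index; no new analytic input is required. First I would rewrite (\ref{Unrecurrence}) using $2(n-1)+1 = 2n-1$, so that it reads
\begin{equation}
U_n = W_n - \frac{2n-1}{\beta} U_{n-1} + U_{n-2} - W_{n-2}.
\end{equation}

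Next I substitute $U_j = Y_j + W_j$ for $j \in \{n, n-1, n-2\}$ on both sides. On the left this produces $Y_n + W_n$; on the right the leading $W_n$ cancels the $W_n$ coming from the left, and the term $U_{n-2} = Y_{n-2} + W_{n-2}$ together with the explicit $-W_{n-2}$ makes the $W_{n-2}$ contributions cancel as well. After these cancellations one is left with $Y_n = -\frac{2n-1}{\beta}\bigl(Y_{n-1} + W_{n-1}\bigr) + Y_{n-2}$, which I would rearrange to isolate the forcing term as
\begin{equation}
Y_{n-2} - \frac{2n-1}{\beta} Y_{n-1} - Y_n = \frac{2n-1}{\beta} W_{n-1}.
\end{equation}

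Finally I reindex by replacing $n$ with $n+1$ throughout (equivalently, setting $m=n-1$ and renaming $m$ back to $n$), so that $2n-1$ becomes $2n+1$, the triple $(Y_{n-2}, Y_{n-1}, Y_n)$ becomes $(Y_{n-1}, Y_n, Y_{n+1})$, and $W_{n-1}$ becomes $W_n$; this yields exactly (\ref{ynrecurrence}). The only ``obstacle'' here is purely clerical, namely keeping the index shifts consistent between the $Y$ terms and the forcing term $W$, since the difference equation follows from the established recurrence by algebra alone. The genuine mathematical content lies in the earlier derivation of (\ref{Unrecurrence}) from the integration-by-parts identity (\ref{UnIIP}) and the Legendre expansion (\ref{LegendreDerivativeAsLegendrePoly}), which this proposition simply repackages into a form amenable to Olver's algorithm.
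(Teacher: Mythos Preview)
Your proof is correct and is exactly the argument the paper has in mind: the proposition is stated without a separate proof, immediately after the recurrence (\ref{Unrecurrence}) for $U_n$, so the intended derivation is precisely the substitution $U_j = Y_j + W_j$ followed by the index shift you carry out. There is nothing to add.
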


The computation of $U_n$ or $Y_n$ using these forward recurrences is straightforward  but it generates instabilities and inaccuracies for large $n$. It is a well known issue as discussed in \cite{Olver67, Cash77, Gautschi67}. \\
An excellent technique which evaluates $U_n$ in a stable way to machine accuracy is Olver's method \cite{Olver67}. The approach consists to treat the difference equation as a boundary-value problem rather than using initial-value technique. This rewrites the recurrence relation as a triple of recurrence relations, two of which are
evaluated forwards to an index greater than the desired $N$, the number of
additional steps required for a given accuracy being determined as part of the
procedure. The third relation is then evaluated by backward recurrence (see \cite{Olver67} for details).  

\section{Error analysis} \label{ErrorAnalysis}

First, let's write the successive approximations introduced in the derivation of the pricing formula (\ref{V4Approximation}).
\begin{equation}
V(x,0) = \int_{-\infty}^{\infty}V(y,T) \tilde{f}(y | x)dy = V_1(x,0) + \epsilon_1
\end{equation}
with 
\begin{equation}\label{ErrorAnv1}
V_1(x,0) = \int_{a}^{b}V(y,T) \tilde{f}(y | x)dy
\end{equation}
and 
\begin{equation} \label{eps1}
\epsilon_1 = \int_{\R-[a,b] } V(y,T) f(y | x)dy
\end{equation}

$\epsilon_1$ corresponds to an integral truncation error. \\
Using (\ref{ThLegendreSeries}), it comes
\begin{equation}\label{ErrorAnv1v2}
V_1(x,0) = \sum_{k=0}^{\infty} A_kV_k = V_2(x,0) + \epsilon_2
\end{equation}
where $A_k$ is defined in (\ref{Ancoefficients}), $V_k$ in (\ref{VnExpression}) with
\begin{equation}\label{ErrorAnv2}
V_2(x,0) = \sum_{k=0}^{N-1} A_kV_k
\end{equation}
and 
\begin{equation}\label{eps2}
\epsilon_2 =  \sum_{k=N}^{+\infty} A_kV_k
\end{equation}
$\epsilon_2$ corresponds to the series truncation error.\\
By posing $C_m^k = \int_a^b P_k\left( \frac{2y-(a+b)}{b-a}\right) e^{i2\pi(\frac{ym}{b-a})} dy$, 
$A_k$ is written as
\begin{equation}\label{ErrorAnAk}
A_k = \frac{2k+1}{b-a}  \left[ \sum_{m=-\infty}^{+ \infty} B_m C_m^k \right]
\end{equation}
Using expression (\ref{ErrorAnv2}), we get 
\begin{equation}
V_2(x,0) = V_3(x,0) + \epsilon_3
\end{equation}
with 
\begin{equation}
V_3(x,0) = \frac{1}{b-a} \sum_{k=0}^{N-1} \sum_{m=-M}^M V_k (2k+1) B_m C_m^k
\end{equation}
and 
\begin{equation} \label{epsilon3}
\epsilon_3 = \sum_{k=0}^{N-1} \frac{V_k (2k+1)}{b-a} \sum_{m \, \in \Z -[-M,M]} B_m C_m^k
\end{equation}

$\epsilon_3$ represents another series truncation error.\\
Finally using (\ref{RelationBkBkTildeRk}), we have
\begin{equation}
V_3(x,0) = V_4(x,0) + \epsilon_4
\end{equation}
with 
\begin{equation}
V_4(x,t) = \frac{1}{b-a} \sum_{k=0}^{N-1} \sum_{m=-M}^M V_k (2k+1) \widetilde{B}_m C_m^k
\end{equation}
and 
\begin{equation}
\epsilon_4 = -\frac{1}{b-a} \sum_{k=0}^{N-1} \sum_{m=-M}^M V_k (2k+1) R_m C_m^k
\end{equation}

$\epsilon_4$ represents another integral truncation error.\\
To summarize we obtain
\begin{equation} \label{vv4eps1234}
V(x,0) = V_4(x,0) + \epsilon_1 + \epsilon_2 + \epsilon_3 + \epsilon_4. 
\end{equation}
$V_4(x,0)$ can be complex. By taking the real part in (\ref{vv4eps1234}), it comes
\begin{equation}
V(x,0) = Re(V_4(x,0)) + \epsilon_1 + \epsilon_2 + Re(\epsilon_3) + Re(\epsilon_4)
\end{equation}
because $V(x,0)$, $\epsilon_1$ and $\epsilon_2$ are real by definition.\\

Secondly, the key to bound the errors lies in the decay rate of the generalized Fourier series coefficients. 
The convergence rate depends on the smoothness of the functions on the expansion interval.\\ 

We summarize in the next theorem taken from \cite{WangXiang12}, 
the decay rates of the coefficients in the
Legendre series expansion and the error bounds of the truncated Legendre
series in the uniform norm. 

let $\| . \|_T$ be the Chebyshev-weighted seminorm defined by
$$\| u \|_T = \int_{-1}^1 \frac{\mid u'(x)\mid}{\sqrt{1-x^2}} dx $$,
$E_{\rho}$ denotes the Bernstein ellipse in the complex plane
$$E_{\rho} = \{ z \in \C \vert z = \frac{1}{2} (u + u^{-1}), u = \rho e^{i \theta}, -\pi \leq \theta \leq \pi  \} $$
and $$f_n(x) = \sum_{j=0}^n a_j P_j(x)$$
 the truncated Legendre series expansions of $f(x)$.

\begin{theorem}\label{WangXiangTheorem}
If $f$, $f'$,...,$f^{k}$  are absolutely continuous on $[-1,1]$ and 
$\| f^{(k)} \|_T = F_k < \infty$ for some $k \geq 1$ ($H_{abs}(k)$), then for each $n > k+1$, 
\begin{equation} \label{BoundAnAbsoluteCont}
\mid a_n \mid \leq \frac{F_k}{(n-\frac{1}{2}) (n-\frac{3}{2})...(n-\frac{2k-1}{2})} 
\sqrt{\frac{\pi}{2(n-k-1)}}.
\end{equation}

If $f$ is analytic inside and on the Bernstein ellipse $E_{\rho}$ with foci $\pm 1$ and major
semiaxis and minor semiaxis summing to $\rho >1$ ($H_a(E_{\rho})$), then for each $n \geq 0 $,
\begin{equation}\label{BoundAnAnalyticBern}
\mid a_n \mid \leq \frac{(2n+1) \ell(E_{\rho})M}{\pi \rho^{n+1}(1-\rho^{-2})}
\end{equation}
where $M = \max_{z \in E_{\rho}} \mid f(z) \mid$ and $\ell(E_{\rho})$ denotes the length of the circumference of $E_{\rho}$.
\end{theorem}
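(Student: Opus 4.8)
The plan is to treat the two estimates separately, as they rest on different representations of the Legendre coefficient. In both cases the starting point is the orthogonality formula $a_n=\left(n+\tfrac12\right)\int_{-1}^1 f(x)\,P_n(x)\,dx$, which follows from $\int_{-1}^1 P_n(x)P_m(x)\,dx=\frac{2}{2n+1}\delta_{nm}$. For the finite-smoothness bound (\ref{BoundAnAbsoluteCont}) I would use repeated integration by parts, while for the analytic bound (\ref{BoundAnAnalyticBern}) I would deform the integral to a contour on the Bernstein ellipse $E_{\rho}$.

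\emph{Finite smoothness.} The identity (\ref{LegendreDerivativeProperty}) shows that $\frac{P_{n+1}(x)-P_{n-1}(x)}{2n+1}$ is an antiderivative of $P_n(x)$, and since $P_{n+1}(\pm1)=P_{n-1}(\pm1)$ every boundary term vanishes. One integration by parts therefore gives $\int_{-1}^1 f\,P_n\,dx=-\frac{1}{2n+1}\int_{-1}^1 f'\,(P_{n+1}-P_{n-1})\,dx$, moving a derivative onto $f$ at the price of a factor $\frac{1}{2n+1}$ and spreading the degree to the neighbours of $n$. Iterating $k$ times expresses $a_n$ as an explicit constant — in which the reciprocals generated at each stage combine into the product $(n-\tfrac12)(n-\tfrac32)\cdots(n-\tfrac{2k-1}{2})$ in the denominator — times an integral of $f^{(k)}$ against a fixed linear combination of Legendre polynomials of degree close to $n$. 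A final integration by parts brings in $f^{(k+1)}$, so that the Chebyshev-weighted seminorm $\|f^{(k)}\|_T=F_k$ factors out, and the residual kernel is controlled by the classical pointwise bound $|P_m(\cos\theta)|\le\sqrt{\frac{2}{\pi m\sin\theta}}$; the weight $(1-x^2)^{-1/2}$ together with this estimate is exactly what produces the factor $\sqrt{\frac{\pi}{2(n-k-1)}}$. The delicate part is the bookkeeping of the index shifts and the verification that all boundary contributions cancel, so that the constant emerges precisely as stated.

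\emph{Analyticity.} Here I would introduce the Legendre function of the second kind $Q_n(z)=\frac12\int_{-1}^1\frac{P_n(t)}{z-t}\,dt$ and substitute the Cauchy representation $f(x)=\frac{1}{2\pi i}\oint_{E_{\rho}}\frac{f(z)}{z-x}\,dz$ into the coefficient formula. Since $E_{\rho}$ does not meet $[-1,1]$, Fubini's theorem justifies interchanging the two integrations and yields $a_n=\frac{2n+1}{2\pi i}\oint_{E_{\rho}}f(z)\,Q_n(z)\,dz$. Bounding $|f(z)|\le M$ on $E_{\rho}$, estimating the contour length by $\ell(E_{\rho})$, and using the decay $|Q_n(z)|\le C\,\rho^{-(n+1)}$ on the ellipse then gives (\ref{BoundAnAnalyticBern}). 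The estimate on $Q_n$ is obtained from $Q_n(z)=O\big((z+\sqrt{z^2-1})^{-(n+1)}\big)$ together with the parametrisation $z=\tfrac12(u+u^{-1})$, $u=\rho e^{i\theta}$, for which $|z+\sqrt{z^2-1}|=\rho$, and the factor $(1-\rho^{-2})^{-1}$ reflects the $(z^2-1)^{-1/2}$ weight and the geometry of $E_{\rho}$. I expect the sharp, uniform bound on $|Q_n|$ over the whole ellipse to be the main obstacle, since a crude estimate would destroy the clean constant; controlling $Q_n$ through its integral representation under the change of variable $u\mapsto z$ is the crux of the argument.
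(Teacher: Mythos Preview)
The paper does not prove this theorem at all: it is stated as a quotation from \cite{WangXiang12} (``We summarize in the next theorem taken from \cite{WangXiang12}\ldots'') and is used as a black box in the error analysis of Section~\ref{ErrorAnalysis}. There is therefore no proof in the paper to compare against.

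Your outline is, in fact, the approach of Wang and Xiang: repeated integration by parts via $(2n+1)P_n=P'_{n+1}-P'_{n-1}$ for the absolutely continuous case, and a contour representation through the second-kind Legendre function $Q_n$ for the analytic case. One point to sharpen in the first part: you write ``a final integration by parts brings in $f^{(k+1)}$'', but under the hypothesis only $f^{(k)}$ is guaranteed to exist and have finite Chebyshev seminorm. The correct finishing step is a \emph{single} integration by parts on the remaining $\int f^{(k)}P_m$ that differentiates $P_m$ (not $f^{(k)}$), so that $f^{(k)}$ stays as is and the resulting integrand is bounded by $|f^{(k+1)}|$ only in the distributional sense implicit in the seminorm $\|f^{(k)}\|_T=\int_{-1}^1 |(f^{(k)})'|/\sqrt{1-x^2}\,dx$; one then invokes the uniform bound $|P_m(\cos\theta)|\le\sqrt{2/(\pi m\sin\theta)}$. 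Your description of the analytic case is accurate, including the identification of the sharp bound on $|Q_n|$ over $E_\rho$ as the crux.
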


\subsection{ Bound for $\epsilon_2$ }

$A_k$ and $V_k$ correspond respectively to the Legendre series coefficients of the $f(x)$ and of the payoff functions. The density function is generally smoother than the payoff function in finance and we expect the coefficient $A_k$ to decay faster than $V_k$. The following proposition makes it precise.
\begin{proposition}\label{PropositionEpsilon2}
Let's assume $\int_a^b V^2(y,T)dy < +\infty$ and define 

\begin{equation}
g(y) \equiv f \left( \frac{b-a}{2}y + \frac{a+b}{2} \right).
\end{equation}

There are two cases:
\begin{enumerate}

\item Under $H_{abs}(k)$ with $k > 1$ for $g$, we have

\begin{equation}
\mid \epsilon_2 \mid \leq \frac{G_k}{(k-1)(N-\frac{1}{2})(N-\frac{3}{2})...(N-\frac{2k-3}{2})} 
		\sqrt{\frac{ \pi }{2(N-k)}}
\end{equation}
where $\| g^{(k)} \|_T = G_k < \infty$.

\item $g$ analytic on $[-1,1]$. Then we get
\begin{equation}
\mid \epsilon_2 \mid \leq \frac{(2N \rho +3\rho - 2N-1)\ell(E_{\rho})M }
{\pi \rho^{N+1}(\rho-1)^2(1-\rho^{-2}) }
\end{equation} 
where $\tilde{g}$ is the analytic continuation of $g$ on and within $E_{\rho}$ with $\rho > 1$,  $M \equiv \max_{z \in E_{\rho}} \mid \tilde{g}(z) \mid$ and $\ell(E_{\rho})$ denotes the length of the circumference of $E_{\rho}$.
\end{enumerate}

\end{proposition}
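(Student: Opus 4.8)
The plan is to reduce everything to Theorem~\ref{WangXiangTheorem} applied to the rescaled density $g$, and then to sum the resulting tail estimate term by term. First I would perform the affine change of variables $x=\frac{b-a}{2}t+\frac{a+b}{2}$ in the coefficient formula (\ref{Ancoefficients}); this collapses it to $A_n=\frac{2n+1}{2}\int_{-1}^{1}g(t)P_n(t)\,dt$, so that the $A_n$ are precisely the classical Legendre coefficients $a_n$ of $g$ on $[-1,1]$. Consequently the hypotheses $H_{abs}(k)$ (resp. analyticity on $E_\rho$) imposed on $g$ are exactly what is needed to invoke (\ref{BoundAnAbsoluteCont}) (resp. (\ref{BoundAnAnalyticBern})) with $F_k$ replaced by $G_k=\|g^{(k)}\|_T$ (resp. $M=\max_{E_\rho}|\tilde g|$). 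Writing $\epsilon_2=\sum_{n=N}^{\infty}A_nV_n$ from (\ref{eps2}), the triangle inequality gives $|\epsilon_2|\le\sum_{n=N}^{\infty}|A_n|\,|V_n|$, and the whole proof becomes a matter of controlling this tail.

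Next I would dispose of the $V_n$ factor. Since $V_n=\int_a^b V(y,T)P_n(\cdots)\,dy$, Cauchy--Schwarz together with the orthogonality relation yields $|V_n|\le\sqrt{\frac{b-a}{2n+1}}\big(\int_a^b V^2(y,T)\,dy\big)^{1/2}$, which is finite precisely by the standing assumption $\int_a^b V^2<\infty$. In particular the $V_n$ are uniformly bounded (indeed decaying), so the decay rate of $\epsilon_2$ is dictated entirely by the tail $\sum_{n\ge N}|A_n|$ coming from the density; this is the quantitative content of the remark that $A_n$ decays faster than $V_n$.

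For the piecewise-smooth case I would insert (\ref{BoundAnAbsoluteCont}) and factor the slowly varying term $\sqrt{\pi/(2(n-k-1))}$ out of the sum by bounding it at $n=N$, leaving $\sum_{n\ge N}\big[\prod_{j=1}^{k}(n-\tfrac{2j-1}{2})\big]^{-1}$. The key device is the telescoping identity
\begin{equation}
\frac{k-1}{\prod_{j=1}^{k}\left(n-\frac{2j-1}{2}\right)}=R(n)-R(n+1),\qquad R(n):=\left[\prod_{j=2}^{k}\left(n-\tfrac{2j-1}{2}\right)\right]^{-1},
\end{equation}
which collapses the sum to $\frac{1}{k-1}R(N)$ and reproduces both the factor $(k-1)^{-1}$ and a product of $k-1$ half-integer-shifted factors matching $(N-\tfrac12)(N-\tfrac32)\cdots(N-\tfrac{2k-3}{2})$ in the claim (up to the one-step index convention at the endpoint). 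The analytic case is simpler: inserting (\ref{BoundAnAnalyticBern}) reduces the tail to the arithmetico-geometric series $\sum_{n\ge N}(2n+1)\rho^{-(n+1)}$, which sums in closed form to a rational function of $N$ and $\rho$ over $(\rho-1)^2\rho^{N+1}$, giving the stated numerator $2N\rho+3\rho-2N-1$.

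I expect the \emph{main obstacle} to be the exact evaluation of these tails with the correct constants rather than the conceptual structure. In the piecewise-smooth case one must verify the telescoping identity, carry the endpoint/index bookkeeping carefully (the product in the final bound differs from the naively telescoped product by a one-step shift), and justify pulling the square-root factor out at $n=N$; in the analytic case one must sum the arithmetico-geometric series without slippage in the power of $\rho$. A secondary point requiring care is confirming that $g$, rather than $f$ on $[a,b]$, genuinely satisfies the regularity/analyticity hypotheses of Theorem~\ref{WangXiangTheorem}, which is immediate since the change of variables is affine and hence preserves both absolute continuity of the derivatives and analyticity on the relevant Bernstein ellipse.
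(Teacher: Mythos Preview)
Your proposal is correct and follows the same approach as the paper: both use Cauchy--Schwarz to show $V_n\to 0$, reduce $|\epsilon_2|$ to the Legendre-coefficient tail $\sum_{n\ge N}|A_n|$ of the rescaled density $g$, and then apply Theorem~\ref{WangXiangTheorem}. You spell out the telescoping and arithmetico-geometric tail sums explicitly where the paper simply cites \cite{WangXiang12}, and both arguments share the same informal step of dropping the $|V_n|$ factor once it is eventually $\le 1$ (the paper's ``for $N\gg\infty$'').
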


\begin{proof}

We have $V_k \to 0$ as $k \to \infty$. Indeed
\begin{align}
\mid V_k \mid &=\mid \int_{a}^{b}V(y,T) P_k \left( \frac{2y-(a+b)}{b-a} \right)dy  \mid \\
& \leq \| v \|_{L^2[a,b]} \sqrt{\frac{b-a}{2k+1}}.
\end{align}
where we have used Cauchy-Schwartz inequality and  $\| P_k \| = \sqrt{ \frac{2}{2k+1} }$.\\
So for $N >> \infty$, we write
\begin{equation}
\mid \epsilon_2 \mid \leq \sum_{k=N}^{\infty} \mid A_k.V_k\mid \leq \sum_{k=N}^{\infty} \mid A_k \mid  \equiv E(N)
\end{equation}  
Error $\epsilon_2$ is thus dominated by the Legendre series truncation of $f(x)$. \\
To bring the analysis in the interval $[-1,1]$, we perform a change of variable 
$y = \frac{2x-(a+b)}{b-a}$ and define $g(y) = f(\frac{b-a}{2}y + \frac{a+b}{2})$.\\
We consider 2 cases:
\begin{itemize}

\item Under $H_{abs}(k)$ with $k \geq 1$ for $g$, using (\ref{BoundAnAbsoluteCont}) and following the arguments in \cite{WangXiang12}, it comes
\begin{equation}
E(N) \leq \frac{G_k}{(k-1)(N-\frac{1}{2})(N-\frac{3}{2})...(N-\frac{2k-3}{2})} 
		\sqrt{\frac{ \pi }{2(N-k)}}
\end{equation} 

\item $g$ analytic on $[-1,1]$. \\
By the the theory of analytic continuation, there always exists a Bernstein ellipse 
$E_{\rho}$ with $\rho > 1$ such that $\tilde{g}$, the continuation of $g$, is analytic on and within $E_{\rho}$. Using (\ref{BoundAnAnalyticBern}) and following the arguments in \cite{WangXiang12}, we derive
\begin{equation}
E(N) \leq \frac{(2N \rho +3\rho - 2N-1)\ell(E_{\rho})M }
{\pi \rho^{N+1}(\rho-1)^2(1-\rho^{-2}) }
\end{equation} 

\end{itemize}
\end{proof}
\subsection{ Bound for $\epsilon_3$ }

\begin{proposition}\label{proposition:epsilon3}
If
\begin{enumerate}
\item 
\begin{equation} \label{Erroranalysisboundary}
f(b) = f(a), f^{(1)}(b) = f^{(1)}(a), ...,   f^{(l-1)}(b) = f^{(l-1)}(a) 
\end{equation}


\item $f^{(l)}(x)$ is integrable 

\end{enumerate}
Then
\begin{equation}
\epsilon_3 = \BTau(\frac{C_{N-1}}{M^l}) \hspace{0.5cm} for \mid M \mid >> 1
\end{equation}
with $C_{N-1} \equiv \sum_{k=0}^{N-1} \frac{\mid V_k \mid (2k+1)}{b-a}$.\\
In particular if the function $f$ is differentiable to all orders and $(1)$ is satisfied for any $l$, then $\epsilon_3$ decreases faster than $\frac{1}{\mid M \mid^l}$ for any finite power of $l$. This is the exponential convergence property. 
\end{proposition}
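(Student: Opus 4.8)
The plan is to start from the explicit expression \eqref{epsilon3} for $\epsilon_3$, pass to absolute values, and factor out the quantity $C_{N-1}=\sum_{k=0}^{N-1}\frac{|V_k|(2k+1)}{b-a}$, so that everything reduces to a bound on the Fourier tail $\sum_{m\in\Z-[-M,M]}B_mC_m^k$ that is uniform in $k$. Concretely, by the triangle inequality (the inner $m$-sum being absolutely convergent under the estimates below),
\begin{equation}
|\epsilon_3|\le \sum_{k=0}^{N-1}\frac{|V_k|(2k+1)}{b-a}\,\Big|\sum_{m\in\Z-[-M,M]}B_mC_m^k\Big|,
\end{equation}
so it suffices to show $\big|\sum_{|m|>M}B_mC_m^k\big|\le D\,M^{-l}$ with a constant $D$ independent of $k$ (it may depend on $N$, on $l$, and on $f$); factoring $D$ and $C_{N-1}$ out then yields $|\epsilon_3|\le D\,C_{N-1}M^{-l}$. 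The whole argument thus rests on the decay rate of the Fourier coefficients $B_m$ of $f$, which is exactly where hypotheses $(1)$ and $(2)$ enter.

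The key step is the estimate $|B_m|=\mathcal{O}(|m|^{-l})$. Writing $\omega_m=\frac{2\pi m}{b-a}$ and recalling $B_m=\frac{1}{b-a}\int_a^b f(x)e^{-i\omega_m x}\,dx$, I would integrate by parts $l$ times. Since $\omega_m(b-a)=2\pi m$, one has $e^{-i\omega_m b}=e^{-i\omega_m a}$, so the boundary term produced at the $j$-th step is proportional to $f^{(j)}(b)-f^{(j)}(a)$, which vanishes for $j=0,\dots,l-1$ precisely by the matching conditions \eqref{Erroranalysisboundary}. Hence every boundary term drops out and
\begin{equation}
B_m=\frac{1}{b-a}\,\frac{1}{(i\omega_m)^l}\int_a^b f^{(l)}(x)e^{-i\omega_m x}\,dx,
\end{equation}
so that integrability of $f^{(l)}$ gives $|B_m|\le \frac{1}{b-a}\big(\tfrac{b-a}{2\pi|m|}\big)^l\,\|f^{(l)}\|_{L^1[a,b]}=\mathcal{O}(|m|^{-l})$. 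This is the classical fact that Fourier decay is governed by the number of globally smooth (periodic) derivatives, and it is the analytic heart of the proposition.

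For the factor $C_m^k$ I would use the closed form \eqref{LegendreFourierCoeff}, which for $m\ne 0$ reads $C_m^k=i^k\frac{b-a}{2}e^{i\pi m(a+b)/(b-a)}\sqrt{2/m}\,J_{k+\frac12}(\pi m)$. Because $k$ ranges over the \emph{finite} set $\{0,\dots,N-1\}$, the large-argument Bessel asymptotics $J_\nu(x)=\mathcal{O}(x^{-1/2})$ hold uniformly over this range, giving $|C_m^k|=\mathcal{O}(|m|^{-1})$ with an $N$-dependent constant. Combining with the previous estimate yields $\sum_{|m|>M}|B_m|\,|C_m^k|\le D\sum_{|m|>M}|m|^{-l-1}=\mathcal{O}(M^{-l})$ uniformly in $k$, which, after factoring out $C_{N-1}$, establishes $\epsilon_3=\mathcal{O}(C_{N-1}/M^l)$. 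The ``in particular'' statement then follows immediately: if $f\in C^\infty$ and \eqref{Erroranalysisboundary} holds for every $l$, then the hypotheses are met for all $l$, so $|B_m|$ decays faster than any power of $|m|$, and hence $\epsilon_3$ decays faster than $M^{-l}$ for every finite $l$, i.e.\ the claimed exponential (faster-than-any-polynomial) convergence.

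The step I expect to be the main obstacle is obtaining \emph{exactly} the power $M^{-l}$ rather than a weaker $M^{-(l-1/2)}$: the crude uniform bound $|J_\nu|\le 1$ only gives $|C_m^k|=\mathcal{O}(|m|^{-1/2})$, and recovering the extra half power requires the genuine Bessel asymptotics together with the observation that the $k$-sum is finite, so the asymptotic constant can be taken uniform in $k$ at the cost of an $N$-dependence that is harmlessly absorbed into the $\mathcal{O}$-constant outside $C_{N-1}$. A secondary point to check carefully is the validity of the repeated integration by parts, for which one should assume $f,\dots,f^{(l-1)}$ absolutely continuous so that the boundary terms and the final integral are well defined.
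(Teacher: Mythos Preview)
Your proposal is correct and follows essentially the same route as the paper: bound $|B_m|=\mathcal{O}(|m|^{-l})$ from the smoothness/periodicity hypotheses, bound $|C_m^k|=\mathcal{O}(|m|^{-1})$ via the closed form \eqref{LegendreFourierCoeff} together with the large-argument Bessel asymptotics, combine to get $|B_m||C_m^k|=\mathcal{O}(|m|^{-l-1})$, and sum the tail. The only cosmetic difference is that the paper obtains the Fourier-coefficient decay by citing a standard result (Theorem~4, p.~42 in Boyd), whereas you spell out the repeated integration-by-parts argument behind it; your additional remarks on the uniformity in $k$ (finite range $0\le k\le N-1$, so the Bessel asymptotic constant is absorbed harmlessly) make explicit a point the paper leaves implicit.
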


\begin{proof}
Let's fix $k \in [0,N-1]$. For $\mid M \mid >>1 $,
\begin{equation}
\mid \sum_{m \, \in \Z -[-M,M]} B_m C_m^k \mid \leq \sum_{m \, \in \Z -[-M,M]} \mid B_m \mid \mid C_m^k \mid 
\end{equation} 
By applying theorem 4 p.42 in \cite{Boyd00}, we get 
\begin{equation}\label{Bmlastbound}
\mid B_m  \mid \leq \frac{C_1}{\mid m \mid^l}
\end{equation} 
for a constant $C_1$ independent of $m$.\\
Using (\ref{LegendreFourierCoeff}), we have 
\begin{equation}\label{Cmkfirstbound}
\mid C_m^k  \mid \leq \frac{C_2}{ \sqrt{ \mid m \mid} } \mid  J_{k+\frac{1}{2}}(\pi m) \mid  
\end{equation}
for  a constant $C_2$ independent of $m, k$.\\
Applying the property $ n \in \Z, \, J_{\nu}(ze^{n \pi i}) = e^{n \nu \pi i} J_{\nu}(z)$ to $n=1$, we get
$\mid J_{\nu}(-z) \mid = \mid J_{\nu}(z)\mid$. Using the asymptotic result for $x \in \R, \, x \to +\infty$ (theorem 2.13 in \cite{MartinK})
\begin{equation}
J_{\nu}(x) \backsim \sqrt{\frac{2}{\pi x}} cos(x - \frac{\pi}{4} - \frac{\nu \pi}{2}),
\end{equation}

The expression in (\ref{Cmkfirstbound}) becomes, for $m >> 1$
\begin{equation}\label{Cmklastbound}
\mid C_m^k  \mid \leq \frac{C_3}{ \mid m \mid }.   
\end{equation}
with a constant $C_3$ independent of $m, k$.\\
So with (\ref{Bmlastbound}) and (\ref{Cmklastbound}), it comes
\begin{equation}
\mid B_m \mid C_m^k \mid \leq  \frac{C_4}{ \mid m \mid^{l+1} } 
\end{equation} 
for  a constant $C_4$ independent of $m, k$.\\
The series truncation error below (see \cite{BenderOrzag78} for proof) behaves like, for $M >> 1$,
\begin{equation}\label{SeriesTruncationAlg}
\sum_{m=M+1}^{\infty} \frac{1}{m^n}  \backsim \frac{1}{(n-1)M^{n-1}}.
\end{equation}
Applying (\ref{SeriesTruncationAlg}) with $n=l+1$, we finally obtain, for $M >>1$,
\begin{equation}
\sum_{m \, \in \Z -[-M,M]} \mid B_m \mid \mid C_m^k \mid  = \BTau \big(\frac{1}{M^l} \big).
\end{equation}
and can deduce directly $ \epsilon_3 = \BTau(\frac{C_{N-1}}{M^l})$.
\end{proof}

\remark{In practice, we believe the condition (\ref{Erroranalysisboundary}) in proposition \ref{proposition:epsilon3} is {\it{nearly}} satisfied if the boundary points $a$ and $b$ are chosen appropriately. Indeed,  $f(x)$, being a probability density function, converges to $0$ when $\mid x \mid \to \infty$.  Let's consider the benchmark and highly tractable Black Scholes model where the gaussian density function, $f_{m,\sigma}(x)$ with mean $m$ and standard deviation $\sigma$, and its  derivatives are known analytically and given by
\begin{equation}\label{GaussianDerivatives}
f^{(n)}_{m, \sigma}(x) = \frac{(-1)^n H_n(\frac{x-m}{\sigma}) f_{m, \sigma}(x)}{\sigma^n}
\end{equation}
with $H_n$ the {\it{Hermite polynomials}} defined by $H_n(x) = (-1)^n e^{\frac{x^2}{2}} \frac{d^n}{dx^n} e^{-\frac{x^2}{2}}$.\\
Figure \ref{figure:GaussianDerivatives} shows the graph of $f^{(n)}_{m, \sigma}$ for various values 
$n$. With $a = -1.7813$ and $b = 1.7188$, we observe clearly that condition (\ref{Erroranalysisboundary})
is {\it{closely}} satisfied. We will give insight in the choice of [a, b] in Section \ref{subsection:Truncation Range}. 

\begin{figure}[htbp]
  \centering
  \includegraphics[width=1.2 \textwidth, trim={1cm 7.5cm 1cm 6.5cm}, clip]{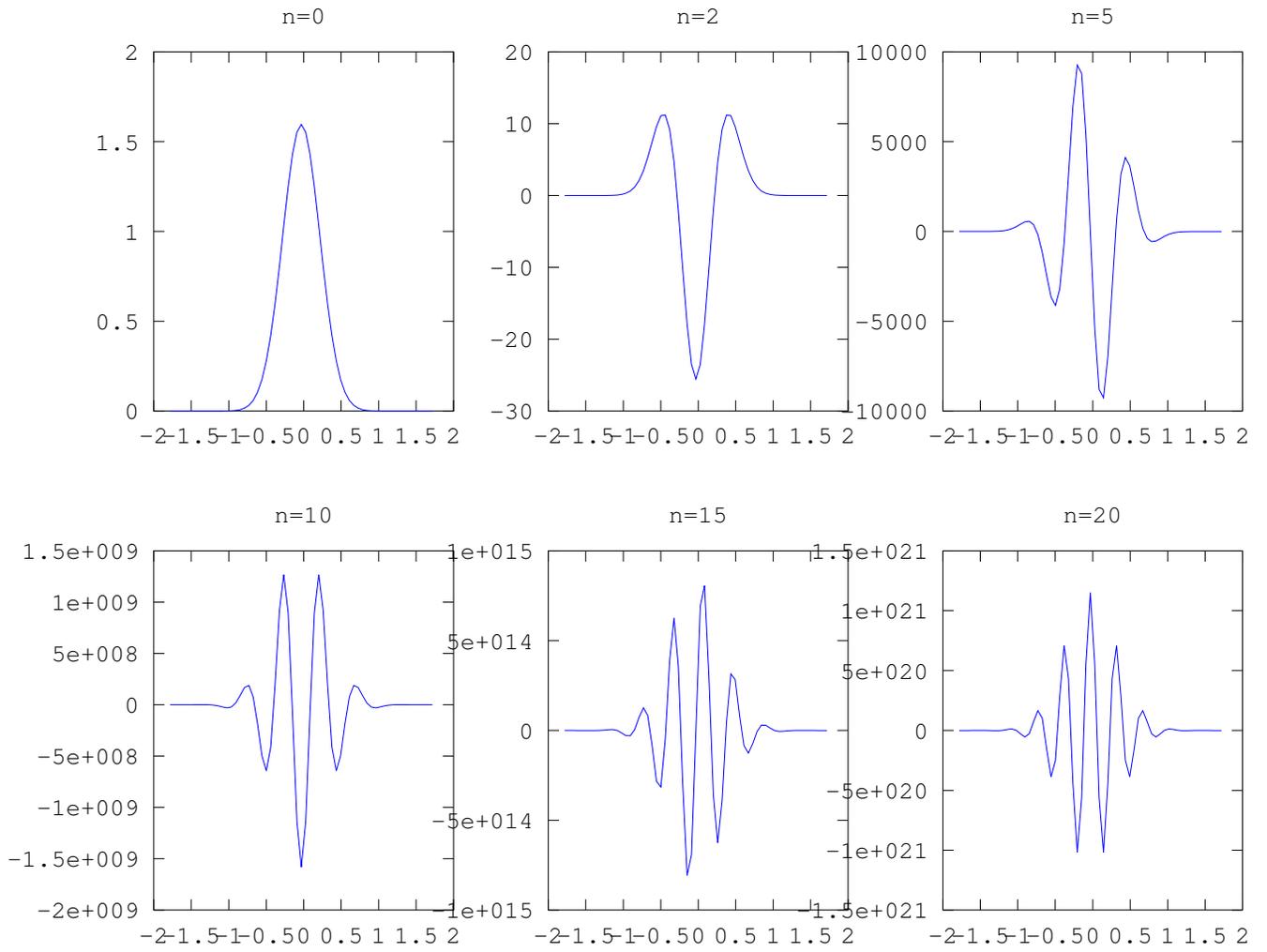}
  \caption{Various derivatives of the Gaussian density function in Black Scholes model (see section \ref{subsection:BSmodel}). The parameters are $S_0 = 1, r = 0, T = 1, \sigma = 0.25 $ and 
  $a = -1.7813$, $b = 1.7188$ with $L = 7$ for the truncation range (\ref{TruncationRange}).}
  \label{figure:GaussianDerivatives}
\end{figure}

\subsection{ Bound for $\epsilon_1$ and $\epsilon_4$ }
$\epsilon_1$ is simply bounded as $|\epsilon_1| \leq \int_{\R-[a,b] } |V(y,T)| \tilde{f}(y | x)dy$ and is small as soon as $\tilde{f}(y)$  decays to 0 faster than $V(y,T)$ in the tail.\\
$\epsilon_4$ is essentially bounded by the integral truncation of the density function as stated in the following proposition. 

\begin{proposition}\label{PropositionEpsilon4}
\begin{equation}
\mid \epsilon_4 \mid  \leq C_{N,M}. \epsilon
\end{equation}

where $C_{N,M} \equiv \frac{ \sum_{k=0}^{N-1} \sum_{m=-M}^M \mid V_k (2k+1) C_m^k \mid }{b-a}$ 
and $\epsilon \equiv \frac{1}{b-a} \left[ \tilde{F}(a) + 1 - \tilde{F}(b) \right]$ with $\tilde{F}(x)$  the cumulative distribution function of $\tilde{f}(x)$.
\end{proposition}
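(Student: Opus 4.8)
The plan is to bound $\epsilon_4$ term-by-term with the triangle inequality and then exploit the fact, already established in the paper, that the Fourier tail coefficients $R_m$ admit a bound by $\epsilon$ that is \emph{uniform} in the index $m$. Recall from its definition that
\begin{equation}
\epsilon_4 = -\frac{1}{b-a} \sum_{k=0}^{N-1} \sum_{m=-M}^M V_k (2k+1) R_m C_m^k,
\end{equation}
where $R_m$ is the integral of $\widetilde{f}$ over $\R-[a,b]$ given in (\ref{Rkdefinition}). Since the double sum is finite, no convergence issue arises and every manipulation below is purely algebraic.

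First I would apply the triangle inequality, pushing the modulus inside the finite double sum and isolating the factor $\mid R_m \mid$:
\begin{equation}
\mid \epsilon_4 \mid \leq \frac{1}{b-a} \sum_{k=0}^{N-1} \sum_{m=-M}^M \mid V_k (2k+1) C_m^k \mid \, \mid R_m \mid.
\end{equation}
Next I would invoke the estimate derived immediately after (\ref{Rkdefinition}), namely that for every $m \in \Z$,
\begin{equation}
\mid R_m \mid \leq \frac{1}{b-a} \left[ \widetilde{F}(a) + 1 - \widetilde{F}(b) \right] = \epsilon,
\end{equation}
a bound which holds uniformly because it comes from majorizing the oscillatory factor $e^{-i2\pi(xm/(b-a))}$ by $1$ and is therefore independent of $m$ (including $m=0$). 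Substituting this uniform bound and pulling $\epsilon$ out of the double sum gives
\begin{equation}
\mid \epsilon_4 \mid \leq \epsilon \cdot \frac{1}{b-a} \sum_{k=0}^{N-1} \sum_{m=-M}^M \mid V_k (2k+1) C_m^k \mid = C_{N,M} \cdot \epsilon,
\end{equation}
which is exactly the claimed inequality.

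There is no genuine obstacle here: the result is an immediate consequence of the triangle inequality together with the already-proven tail bound on $R_m$. The only point deserving mild care is verifying that the bound $\mid R_m \mid \leq \epsilon$ is truly index-independent, so that $\epsilon$ may be extracted as a common factor of the entire double sum rather than being absorbed into $m$-dependent weights; this is precisely what the crude majorization $\mid e^{-i2\pi(xm/(b-a))} \mid = 1$ guarantees. The remaining factor collapses, by definition, to the constant $C_{N,M}$, completing the argument.
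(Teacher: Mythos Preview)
Your proof is correct and follows essentially the same approach as the paper: bound $\mid R_m \mid$ uniformly in $m$ by $\epsilon$ via the crude majorization $\mid e^{-i2\pi(xm/(b-a))} \mid = 1$, then apply the triangle inequality to the finite double sum and factor out $\epsilon$. The paper's argument is more terse but identical in substance.
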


\begin{proof}
$R_k$, being defined in (\ref{Rkdefinition}), can be bounded as
\begin{equation}
\mid R_k \mid \leq \frac{1}{b-a} \left[  \int_{-\infty}^{a} \tilde{f}(x) dx + \int_{b}^{ +\infty } \tilde{f}(x) dx \right] = \frac{1}{b-a} \left[ \tilde{F}(a) + 1 - \tilde{F}(b) \right] = \epsilon
\end{equation}
It comes
\begin{equation}
\mid \epsilon_4 \mid  \leq \frac{ \sum_{k=0}^{N-1} \sum_{m=-M}^M \mid V_k (2k+1) C_m^k \mid }{b-a} \epsilon = C_{N,M} \epsilon.
\end{equation}
\end{proof}

\remark{ \label{regularitypayoff}

\begin{itemize}

\item Our error analysis relies on the smoothness of the density function and not on the regularity of the payoff function. 
We just require the payoff function some integrability conditions on bounded interval and that the density function decays faster to $0$ at infinity. 
This is particularly relevant in quantitative finance. Indeed, the density functions of asset prices tend to be smoother. 
And the payoffs of some contracts are discontinuous as for the digital option (\ref{digital}) or have a kink at the strike level as for call and put options (\ref{callput}).

\item Some well-established option pricing methods depend on the regularity of the payoff function. For example, in the Carr-Madan approach \cite{CarrMadan99} and its variants, the Fourier transform of a version of valuation formula (\ref{riskneutralvaluation}) is taken with respect to the log-strike price. Damping of the payoff is then necessary as, for example, a call option is not $L^1$-integrable with respect to the logarithm of the strike price. The method’s accuracy depends on the correct value of the damping parameter.
Or in the widely used Monte carlo method with discretisation of SDEs (e.g Euler or Milstein schemes), the smoothness of the payoff function impact directly the order of convergence of the approximation schemes (see \cite{Glasserman03} section 6 or \cite{KloedenPlaten92} section 14.5).  


\end{itemize}
}

\section{Numerical experiments} \label{sec:Numerical experiments}

In this section, we perform a variety of numerical tests to illustrate 
the robustness and accuracy of the new computational pricing method using Legendre polynomial. 
The payoff functions in finance can be continuous or discontinuous. Here the European call options and the European digital call options are considered. 
It allows to show that the convergence of the pricing method using Legendre polynomial does not depend on the continuity of the payoff (see discussion in section \ref{subsec:optionpricing} and remark \ref{regularitypayoff}). We cover a representative class of models widely studied and used in quantitative finance: 
\begin{itemize}
\item Black Scholes  Model;
\item Merton Jump Diffusion Models and Kou Jump Diffusion Models;
\item Heston Stochastic Volatility Model.
\end{itemize}

They represent different schools of thoughts for the modelling of asset prices as random processes. 
In their seminal work in \cite{BlackScholes73}, Black and Scholes modelize the asset prices as a {\it{geometric Brownian motion}} i.e asset prices with continuous paths and a constant volatility. It leads to the famous Black-Scholes formula which gives a theoretical estimate of the price of European-style options. It is perhaps the world's most well-known options pricing model and usually used as a benchmark model by the quantitative finance community. 

However, one of the main shortcoming of Black and Scholes model is to assume the underlying volatility is constant over the life of the derivative, and unaffected by the changes in the price level of the underlying security.  It cannot explain 
long-observed features of the implied volatility surface such as volatility smile and skew, which indicate that implied volatility does tend to vary with respect to strike price and expiry. By assuming that the volatility of the underlying price is a stochastic process rather than a constant, it becomes possible to model derivatives more accurately (see e.g \cite{Gatheral06} and \cite{Wilmott06}). And Heston model is one of the most popular stochastic volatility models for derivatives pricing.

An another school of modelling asset prices consists to introduce jumps as a way to explain why the skew is so steep for very short expirations and why the very short-dated term structure of skew is inconsistent with any stochastic volatility model. 
Or the strongest argument for using discontinuous models is simply the presence of jumps in observed prices (see figure 1 in \cite{TankovVoltchkova09} and \cite{ContTankov04} or \cite{Gatheral06}). Merton jump diffusion model and Kou jump diffusion model are among the most popular jumps models used in quantitative finance. 

In the equity and exchange rates (FX) derivatives market, liquid options like European call or put contracts are quoted for different maturities or tenors with various strikes representing the moneyness. FX markets are particularly liquid at benchmark tenors, such as 1 month (M), 2M, 3M, 6M, 1 year (Y), 2Y and possibly longer dated options \cite{Clark11}. For liquid equity index like Eurostoxx 50 or Nikkei 225, we can observed quotes for maturities from 1 month up to 4 and 5 years \cite{EDS08}. With this in mind, for the tests to be comprehensive, we consider short, standard and long maturities (0.1, 1, 3, 10 years) and in/at/out of the money options.

\subsection{Truncation Range} \label{subsection:Truncation Range}

For practical usage, it is important to determine appropriately and as systematically as possible the range $[a,b]$ to minimise the integral truncation errors $\epsilon_1$ and $\epsilon_4$. Being given the characteristic function of $X = \log( \frac{S_T}{K} )$, we can compute its cumulants $c_n$, defined in (\ref{CumulantsCharacteristics}), and uses the following formula proposed in \cite{FangOosterlee08}:
\begin{equation} \label{TruncationRange}
[a,b] := \left[ c_1 - L \sqrt{ c_2 + \sqrt{c_4} }, c_1 + L \sqrt{ c_2 + \sqrt{c_4} } \right]
\end{equation}

The cumulants for each model are given in appendix B. 
As shown in the error analysis section, the accuracy of the Legendre polynomial pricing method is affected by the choice of the interval $[a, b]$.
Some experience is helpful when choosing the correct truncation range. The value for $L$ is taken to be in $[7,12]$ and will be made explicit for each model in the tests. 

Expression (\ref{TruncationRange}) uses $c_n$ up to degree 4. 
Similar range formula involving the first two moments of $X$ is implemented in \cite{HurnLindsayClelland13}.
In general, using high order cumulants captures better the tail behaviour of the distribution. 

\subsection{Black Scholes  Model}\label{subsection:BSmodel}

For this Model, the SDE for the asset price $S_t$, under risk neutral measure, is given by
\begin{equation} \label{BSSDE}
\frac{dS_t}{S_t} = rdt + \sigma dW_t.
\end{equation}
with $W_t$ the Brownian motion, $r$ the risk free rate and $\sigma$ the volatility parameter.\\
The characteristic function $\varphi(u)$ of $\log(\frac{S_T}{K})$ is 
\begin{equation} \label{BSCharacteristicF}
\varphi(u) = \exp(\mu u i - \frac{1}{2} u^2 \sigma^2T)
\end{equation}
with $\mu = -\frac{1}{2}\sigma^2 T-\log(K)$.\\
The set of parameters is $S_0 = 1, r = 0, T = 10, \sigma = 0.25 $.
With some experiments, choosing $L$ around $7$ is appropriate for the truncation range (\ref{TruncationRange}) and this value is consistent with thosed used in \cite{HurnLindsayClelland13}.\\
The other details of the model are provided in section \ref{annex:BS}.\\
We examine a long maturity with $T=10$. Figure \ref{BSDensity} compares the true Gaussian density and the recovered density functions using respectively $N=M=12$ and $N=M=32$ for the truncation in formula (\ref{ThLegendreSeries}). With $N=M=12$, the approximating density captures the form of the true density although we observe some  slight negative values in the tail. With  $N=M=32$, it is indistinguishable from the true density function. \\
For the pricing, we consider a discontinous payoff with the digital option. As shown in 
Figure \ref{BSCvTests}, the error convergence of the method is exponential in terms of $N$ and $M$ respectively. Indeed, in Black Scholes model, the density function of $\log ( \frac{S_T}{K})$ is gaussian and so is infinitely differentiable with exponential decay to 0 for large $x$. Further, we observe the error convergence rate is basically the same for different strike prices.  

\begin{figure}[htbp]
  \centering
  \includegraphics[width=0.6 \textwidth, height=0.4 \textwidth, trim={1cm 6.5cm 1cm 6cm}, clip]{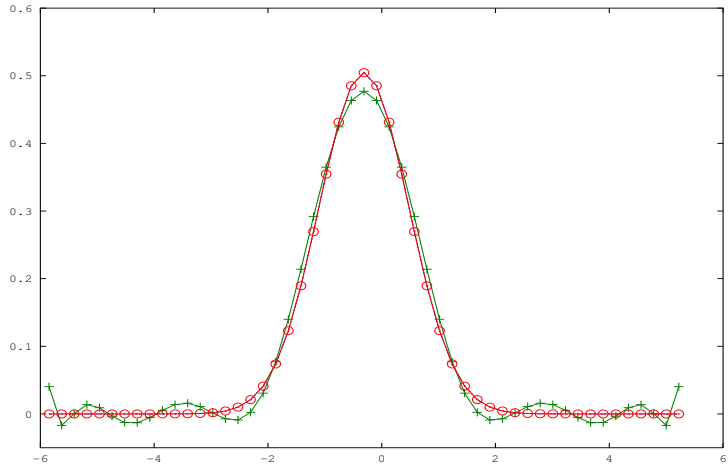}
  \caption{Comparison of the true Gaussian density (solide line) and 
  its approximation based on $N=M=12$ (solide line with '+' marker) and $N=M=32$ (solide line with 'o' marker) for maturity $T=10$.}
  \label{BSDensity}
\end{figure}

\begin{figure}[htbp]
  \centering
  \includegraphics[width=0.9 \textwidth, trim={1cm 7.5cm 1cm 8cm}, clip]{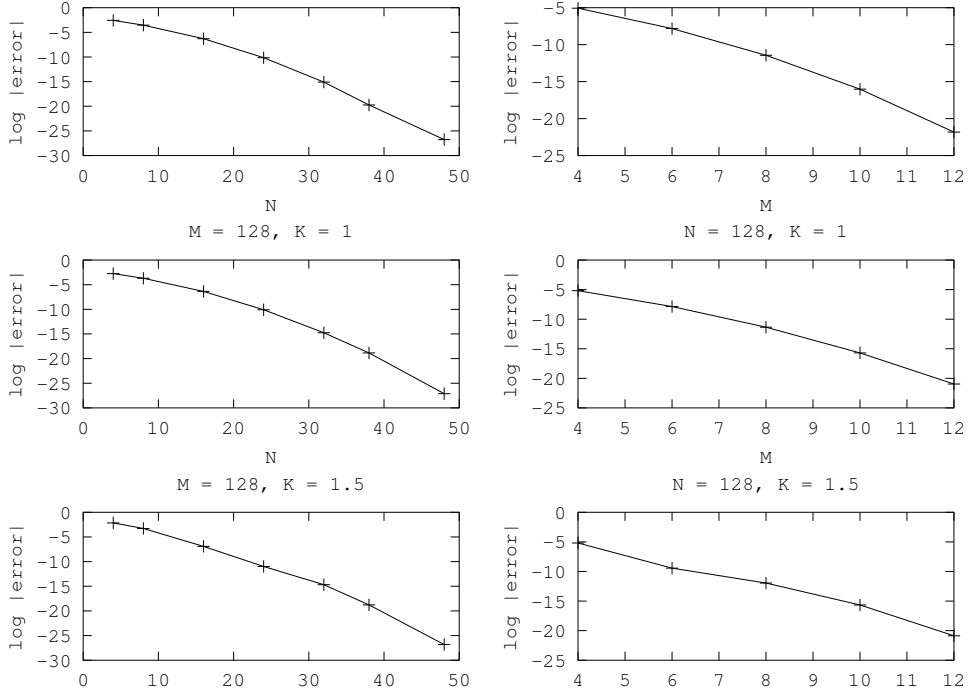}
  \caption{Error convergence for pricing European digital call option with $T = 10$ in Black Scholes model.}
  \label{BSCvTests}
\end{figure}

\subsection{Merton Jump Diffusion Model}
In this model \cite{Merton76, AndersenAndreasen99}, 
the SDE for the asset price $S_t$, under risk neutral measure, is written as
\begin{equation} \label{MertonSDE}
\frac{dS_t}{S_t-} = (r-\lambda.m(t))dt + \sigma dW_t + (J(t)-1)d\pi(t).
\end{equation}
where $W(t)$ is a Brownian motion, $\pi(t)$ a Poisson counting process with constant jump intensity 
$\lambda$ and r the deterministic risk-free interest rate.
$\{J(t)\}_{t \geq 0}$, representing the jump size, is a sequence of independent log normal random variables 
of the form  $J(t) = e^{\mu + \gamma N(t)}$ with $N(t)$ a standard gaussian random variable and $m \equiv E[J(t)-1]$. $\pi, \, W$ and $J$ are all assumed to be independent.  \\
The characteristic function $\varphi(u)$ of $\log(\frac{S_T}{K})$ is
\begin{equation} \label{MertonCharacteristicF}
\varphi(u) = \exp \left( iu\tilde{b}T - \frac{u^2 \sigma^2T}{2} + \lambda T(e^{iu \mu - \frac{\gamma^2 u^2}{2}} -1)  \right)
\end{equation}
where $\tilde{b} = b - \frac{\log(K)}{T}$ and $b = -\frac{1}{2} \sigma^2 - \lambda (e^{\mu + \frac{\gamma^2}{2}}-1) $.\\
The set of parameters is calibrated to market data from \cite{AndersenAndreasen99} with $r=0$ and maturity 3 years:
$S_0 = 1, T = 3, \, \sigma = 0.1765 $, $\lambda = 0.089$, 
$\mu = -0.8898$, $\gamma = 0.4505$. Some experience shows that $L = 10$ is appropriate for the truncation range (\ref{TruncationRange}). It corresponds also to the value recommended in \cite{FangOosterlee08}. 
 The other details of the model are provided in section \ref{annex:Merton}.\\
We study a standard maturity with $T=3$. Figure \ref{MertonDensity} compares the true density and the recovered density functions using respectively $N=M=50$ and $N=M=80$.
The true density is computed using formula (\ref{Mertondensity}) with a truncation in the infinite sum at $50$. First we observe the {\it{Merton}} density function, showing a sharp peak, is less smooth than the Gaussian density function. 
With $N=M=50$, the approximating density captures reasonably well the form of the true density although we can observe some slight negative values in low probability area. With  $N=M=80$, the difference between the true density and the approximating density is not discernible.  \\
We consider the digital option for pricing. As shown in 
Figure \ref{MertonCvTests}, the error convergence of the method is still exponential in $N$ and $M$ respectively. 
But the convergence is slower than in the Black Scholes model as expected in view of the sharp peak density function.  
And the error convergence rate is basically the same for different strike prices.  
\begin{figure}[htbp]
  \centering
  \includegraphics[width=0.8 \textwidth, height=0.5 \textwidth, trim={1cm 6.5cm 1cm 6cm}, clip]{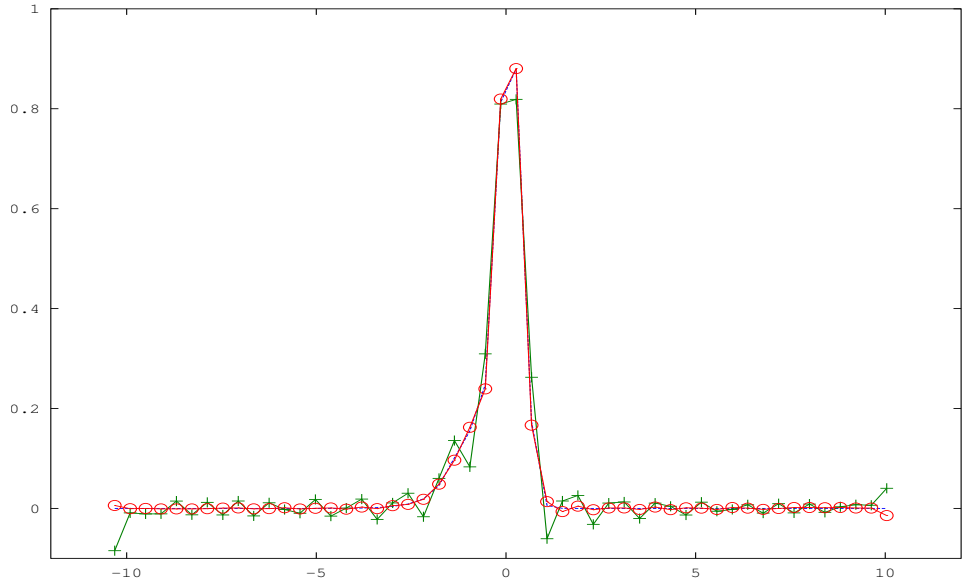}
  \caption{Comparison of the true density function,  (solide line) and 
  its approximation based on $N=M=50$ (solide line with '+' marker) and $N=M=84$ (solide line with 'o' marker) for maturity $T=3$ in Merton jump diffusion model.}
  \label{MertonDensity}
\end{figure}

\begin{figure}[htbp]
  \centering
  \includegraphics[width=0.85 \textwidth, trim={1cm 7.5cm 1cm 6cm}, clip]{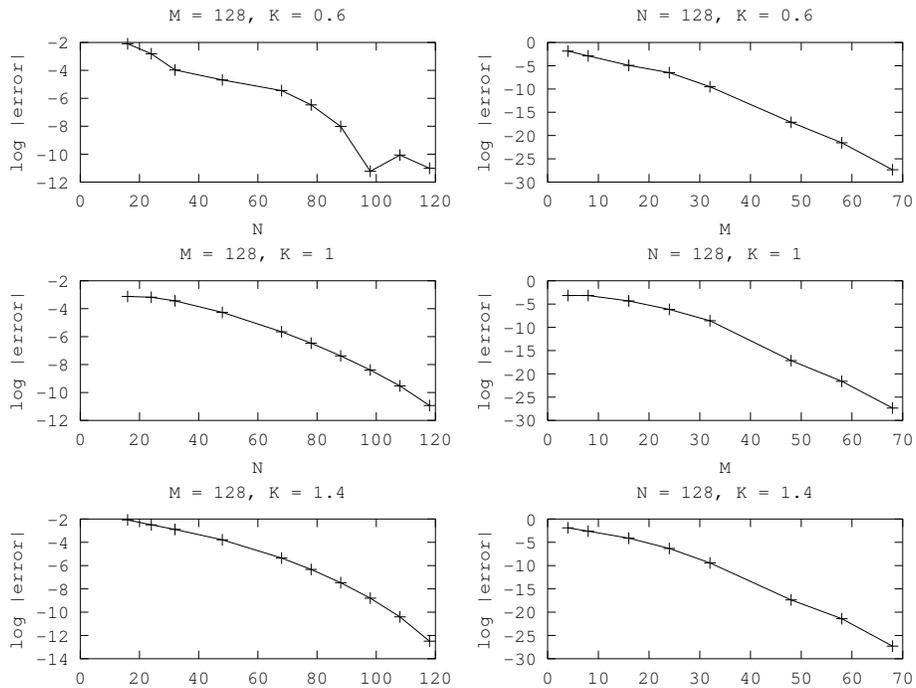}
  \caption{Error convergence for pricing European digital call option with $T = 3$ in Merton Jump Diffusion model.}
  \label{MertonCvTests}
\end{figure}

\subsection{Kou Jump Diffusion Model}

In this model \cite{Kou02}, the dynamic of the asset price, $S(t)$, under risk neutral probability, is 
\begin{equation} \label{KouSDE}
\frac{dS_t}{S_t-} = \mu dt + \sigma dW_t + d \left( \sum_{i=1}^{N(t)}(V_i-1) \right).
\end{equation}

with $W_t$ a standard Brownian motion, $N(t)$ a Poisson process with rate $\lambda$. 
$\{ V_i \}$ is a sequence of independent identically distributed (i.i.d.) nonnegative random variables
such that $Y = \log(V)$ has an asymmetric double exponential distribution with the density 

\begin{equation} 
f_{Y}(y) = p. \eta_1 e^{- \eta_1 y}1_{y \geq 0} + q. \eta_2 e^{ \eta_2 y}1_{y < 0}, \,\, \eta_1>1, 
\eta_2 > 0,
\end{equation}

$p,q \geq 0, \, p+q = 1$, representing the probabilities of upward and downward jumps and
$\mu = \lambda \left( \frac{p}{1- \eta_1} + \frac{1-p}{\eta_2+1} \right)$.\\
The characteristic function $\varphi(u)$ of $\log(\frac{S_T}{K})$ is given by

\begin{equation} \label{KouCharacteristicF}
\varphi(u) = \exp \left( iu\tilde{b}T - \frac{u^2 \sigma^2T}{2} + 
\lambda T iu \left(  \frac{p}{\eta_1-iu}-\frac{1-p}{\eta_2+iu} \right)  \right)
\end{equation}

where $\tilde{b} = b - \frac{\log(K)}{T}$ and $b = -\frac{1}{2} \sigma^2 - \lambda (e^{\mu + \frac{\gamma^2}{2}}-1) $. \\
The set of parameters is from \cite{Kou02} with $r=0$ and maturity 1 year:\\
$S_0 = 1, T = 1, \sigma = 0.16 $, $\lambda = 1$, 
$p = 0.4$ and $\eta_1 = 10, \, \eta_2 = 5$.
Some experience shows that $L = 10$ is appropriate for the truncation range (\ref{TruncationRange}). It corresponds also to the value recommended in \cite{FangOosterlee08}.  
The other details of the model are provided in section \ref{annex:Kou}.\\
Analytical formula for density function being not available,  Figure \ref{KouDensity} presents the recovered density functions for $T = 1$ and $3$ respectively. We observe a sharper-peaked density for $T=1$.\\
For the pricing, we study the European call option with a standard maturity $T=1$ y. 
As shown in Figure \ref{KouCvTests}, the error convergence of the method is exponential in $N$ and $M$ respectively. Further, the error convergence rate is basically the same for different strike prices. 

\begin{figure}[htbp]
  \centering
  \includegraphics[width=0.7 \textwidth, height=0.5 \textwidth, trim={1cm 6.6cm 1cm 6cm}, clip]{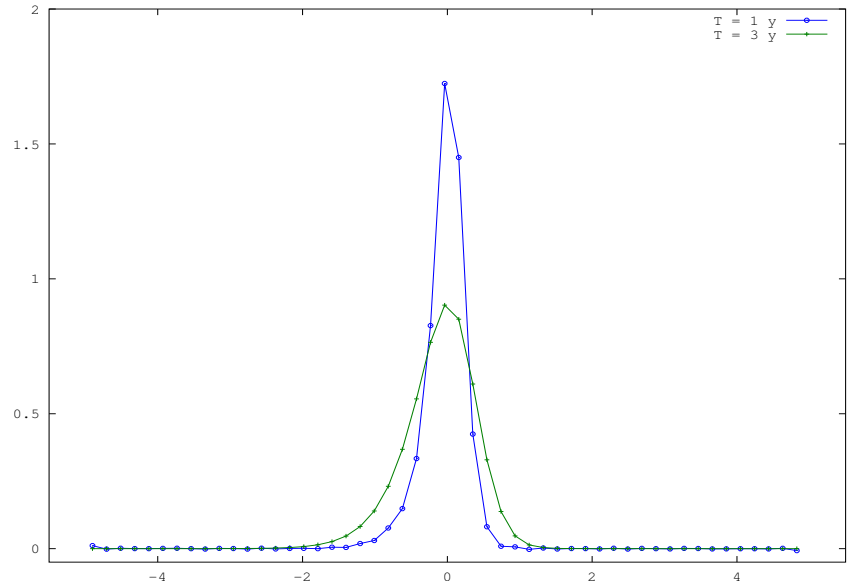}
  \caption{ Recovered density functions in Kou jump diffusion model for $T = 1$y (solide line with 'o' marker) and $T = 3$y (solide line with '+' marker) with $N=M=80$.}   
  \label{KouDensity}
\end{figure}

\begin{figure}[htbp]
  \centering
  \includegraphics[width=0.85 \textwidth, trim={1cm 7.5cm 1cm 6cm}, clip]{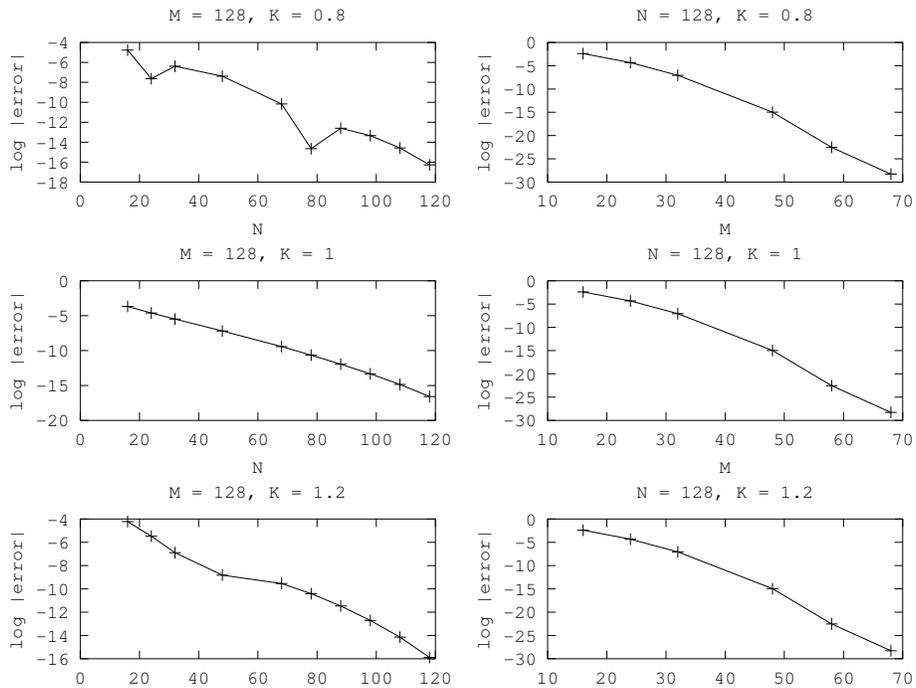}
  \caption{Error convergence for pricing European call option with $T = 1$ in Kou Jump Diffusion model.}
  \label{KouCvTests}
\end{figure}

\subsection{Heston Stochastic Volatility Model }

In this model \cite{Heston93} under risk neutral measure, the SDEs are given by
\begin{align}\label{HestonSDE}
d\tilde{x}_t & = -\frac{1}{2}u_tdt + \sqrt{u_t}dW_{1t}\\
du_t & = \lambda( \bar{u} - u_t)dt + \eta \sqrt{u_t}dW_{2t}
\end{align}

where $\tilde{x}_t$ denotes the log-asset price variable and $u_t$ the variance the asset price process. 
Parameters $\lambda \geq 0, \ \bar{u} \geq 0$ and $\eta \geq 0$ represent the speed of mean reversion, the mean level of variance and the volatility of volatility, respectively. Furthermore, the Brownian motions $W_{1t}$ and $W_{2t}$ are assumed to be 
correlated with correlation coefficient $\rho$. \\
The characteristic function $\varphi(x)$ of $\log(\frac{S_T}{K})$ can be represented by
\begin{equation} \label{HestonCharacteristicF}
\varphi(x) = e^{ -ix \log(K)  + \frac{u_0}{\eta^2} \left( \frac{1-e^{-DT}}{1-G\e^{-DT}} \right) 
(\lambda - i \rho \eta x -D) + \frac{\lambda \bar{u}}{\eta^2} \left[ T(\lambda - i \rho \eta x -D)
-2 \log \left( \frac{1-Ge^{-DT}}{1-G}  \right) \right]} 
\end{equation}

with $D = \sqrt{(\lambda - i \rho \eta x)^2 + (x^2 + ix) \eta^2}$ and 
$G = \frac{\lambda - i\rho \eta x -D}{\lambda - i\rho \eta x + D}$.\\
This characteristic function is uniquely specified, since we take $\sqrt{x+yi}$ such that its real part is nonnegative, and we restrict the complex logarithm to its principal branch.
In this case the resulting characteristic function is the correct one for all complex $\omega$ in the strip of analyticity of the characteristic function \cite{LordKahl10}. \\
The set of parameters is calibrated to market data from \cite{Crisostomo14} with $r=0$ and a short maturity $T = 0.1$:\\
$S_0 = 1, \, T = 0.1, \, \lambda = 0.9626, \, \bar{u} = 0.2957, \, \eta = 0.7544,
\rho = -0.2919, \, u_0 = 0.0983$. \\
Since the analytical formula for 
$c_4$ is involved, instead of  
(\ref{TruncationRange}), as recommended in \cite{FangOosterlee08},  we use the following truncation range:

\begin{equation} \label{TruncationRangeHeston}
[a,b] := \left[ c_1 - 12 \sqrt{ |c_2| }, c_1 + 12 \sqrt{ |c_2| } \right]
\end{equation}

Cumulant $c_2$ may become negative for sets of Heston parameters that do not satisfy
the Feller condition, i.e, $2 \bar{u} \lambda > \eta^2$. We therefore use the absolute 
value of $c_2$. The formulas for cumulants are reported in section (\ref{annex:Heston}).\\
Analytical formula for density function being not available,  Figure \ref{HestonDensity} provides an illustration 
for the recovered density functions with $T = 0.1$ and $T = 1$ respectively. For $T=0.1$, the density is much more peaked.  \\
We examine the European call option with a short maturity $T=0.1$ year for pricing. Although the sharp peaked density, the error convergence of the method is still exponential in $N$ and $M$ respectively, as shown in Figure \ref{HestonCvTests}. Moreover, the error convergence rate is basically the same for different strike prices. 

\begin{figure}[htbp]
  \centering
  \includegraphics[width=0.73 \textwidth, height=0.5 \textwidth, trim={1cm 6.5cm 1cm 6cm}, clip]{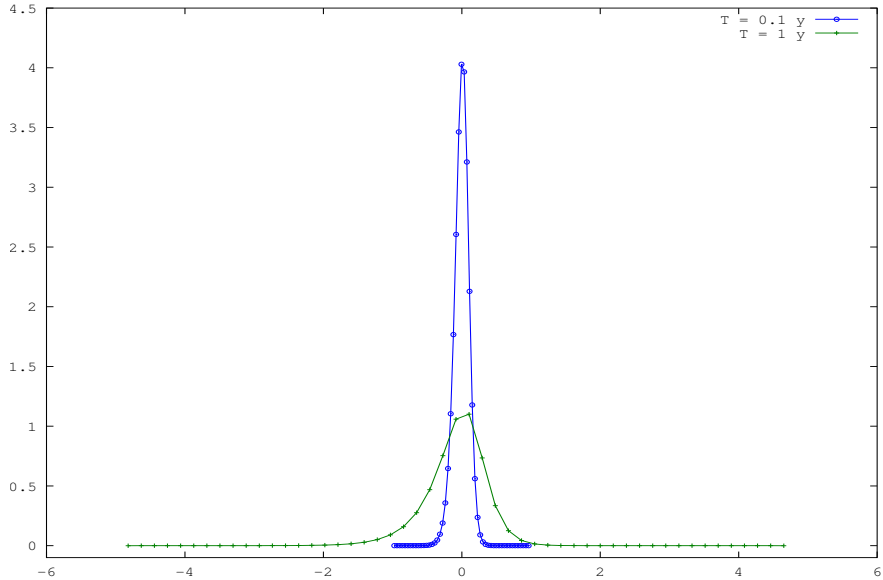}
  \caption{ Recovered density functions in Heston stochastic volatility model for $T = 0.1$ y (solide line with 'o' marker) and $T = 1$ y (solide line with '+' marker) with $N=M=80$.}   
  \label{HestonDensity}
\end{figure}

\begin{figure}[htbp]
  \centering
  \includegraphics[width=0.85 \textwidth, trim={1cm 7.3cm 1cm 7.2cm}, clip]{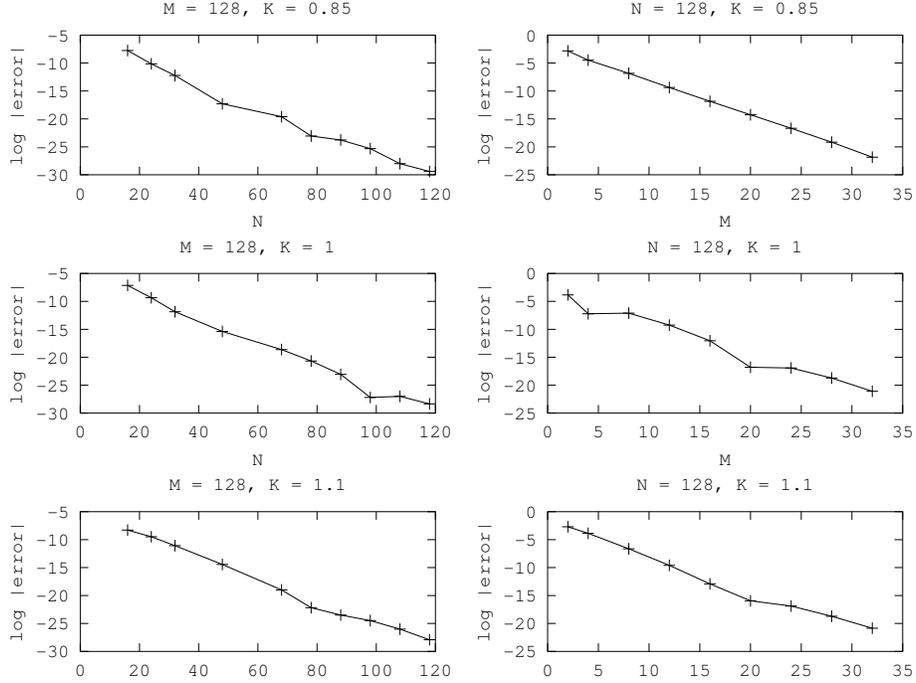}
  \caption{Error convergence for pricing European call option with short maturity $T = 0.1$ in Heston stochastic volatility model.}
  \label{HestonCvTests}
\end{figure}

\section{Conclusion and discussions}





In this paper, we have introduced a method for pricing European-style options combining Fourier series 
and generalized Fourier series with Legendre polynomials.
It can be used as long as a characteristic function for the underlying price process is available. 
It consists to expand the density function as Legendre series and observe that the coefficients can be accurately retrieved from the characteristic functions. This representation of the density function is then uses for the risk-neutral valuation and approximation formulas are derived. These formulas involve the expression (\ref{IntLegendreExp}). However its direct implementation, using formulation (\ref{PropIntLegendreExp}), gives rise of rapid accumulations of round-off errors for large values of $n$. We rewrite these quantities as solution of second-order difference equations and compute them with machine precision using the stable {\it{Olver}} algorithm. 
Also derivation of the pricing method has been accompanied by an error analysis.   
Errors bounds have been derived and the study relies more on smoothness properties which are not provided by the payoff functions, but rather by the density function of the underlying stochastic models. 
This is particularly relevant in quantitative finance for option pricing where the payoffs of the contract are generally not smooth functions. In our numerical experiments, we chose a class of models widely used in quantitative finance. The payoff covered are continuous (call option) and discontinuous (digital call option). 
The tests considered, with various strike prices and maturities, show exponential convergence rate.\\

We suggest a couple of interesting avenues of research: 

\begin{itemize}

\item Here, we have used Olver's algorithm for the computations of the integrals involving Legendre polynomials and exponential functions (\ref{IntLegendreExp}). Indeed, Olver's method consists to replace the original problem by an equivalent boundary value problem, which is solved by Gaussian elimination without pivoting.  Extensions or reformulations of Olver's method have been made. For example Van der Cruyssen \cite{Cruyssen79} 
have shown that if the algebraic equations arising from the use of Olver's method are solved using an LU decomposition method then the total amount of work required is almost halved. See Lozier \cite{Lozier80} for a more detailed discussion. It would be interesting to reconsider and adapt or extend existing algorithms  in order to reduce the amount of computational effort.

\item Accurately valuing financial claims plays a key role in financial modelling, but the risk management of these derivative instruments is at least as important (see e.g \cite{Wilmott06} or \cite{AvellanedaLaurence99}). To undertake this function, we need to compute the {\it{Greeks}} defined as the sensitivity of the price of derivatives to a change in underlying parameters on which the value of an instrument  is dependent. Series expansions for the sensitivity factors, e.g $\Delta = \frac{\partial V}{\partial S_0}$, 
$ \Gamma = \frac{\partial^2 V}{\partial S_0^2}$ or $\nu = \frac{\partial V}{\partial \sigma}$ are let for future research.

\item In this manuscript, we have focused on the pricing of European-style options, which are instrumental and  the building blocks for constructing more complex option products. Extending Legendre polynomials pricing method to cover  more exotic contracts like forward start options,  quanto options,  spread options or options with early-exercise features (see e.g \cite{BompisHok14}, \cite{Pelsser00}, \cite{Haug07}) are exciting area of research. 

\item The calibration, which consists to determine the parameters of a parametric model, is an instrumental preliminary step for option pricing and hedging. Usually, it corresponds to find parameters that make the models consistent with market quotes (e.g a set of European call or put prices for various strikes and maturities) and the formulas derived in proposition \ref{PropositionPricingFormulas} can be used. This is formulated as a minimisation of some loss functions (e.g the squared difference between the quoted and model prices) and commonly leads to a non convex optimisation problem. Standard procedures based on the derivatives of the loss function (e.g quasi-Newton Broyden-Fletcher-Goldfarb-Shanno (BFGS) algorithm) may be not appropriate. Indeed different starting points can lead to quite different solutions, which can have a significant impact on option pricing and sensitivity factors \cite{AzencottGadhyanGlowinski15}. Similarly, different calibration criteria lead to different results \cite{DetlefsenHardle07}. In \cite{GiliSchumann11}, the authors suggested to use heuristic techniques, differential evolution and particle swarm optimisation, which seem to bring some promising results. Exploring recent literature about real life applications of contemporary numerical optimisation and classification techniques in different fields such as \cite{TaorminaChau15, ZhangChau09, DimensionReductionZhangChau09, MayDandyMaier11} is part of our future research.

\end{itemize}
   
\section{Annexe}

\subsection{Appendix A}

We propose in the following proposition some analytical formulas for the computation of expression 
\begin{equation}\label{IntLegendreExp}
\int_{\alpha}^1 P_n(t)e^{\beta t}dt.
\end{equation}

\begin{proposition}\label{PropIntLegendreExp}

for $\beta \neq 0$, 
\begin{align}
\int_{\alpha}^1 P_n(t) e^{\beta t}dt =& \frac{1}{2^n} \sum_{k=0}^{\lfloor \frac{n}{2} \rfloor} (-1)^k C_n^k C_{2n-2k}^n \left[ IEP(\beta, n-2k,1)-IEP(\beta, n-2k,\alpha) \right]  \label{IntLegendreExp1} 
\end{align}
with 
\begin{equation}
IEP(\beta,n,t):=e^{\beta t} \sum_{i=0}^n t^i \frac{n!}{\beta^{n+1-i} i!}(-1)^{n-i}.
\end{equation}
\end{proposition}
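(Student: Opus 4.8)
The plan is to reduce everything to the elementary integral $\int t^m e^{\beta t}\,dt$ and to recognize $IEP(\beta,m,t)$ as its antiderivative. First I would substitute the power-series representation (\ref{LegendrePowerSeries}) of the Legendre polynomial directly into the integral. Since the sum over $k$ is finite (it runs from $0$ to $\lfloor n/2 \rfloor$), the interchange of summation and integration is trivially justified, giving
\begin{equation}
\int_{\alpha}^1 P_n(t) e^{\beta t}dt = \frac{1}{2^n} \sum_{k=0}^{\lfloor \frac{n}{2} \rfloor} (-1)^k C_n^k C_{2n-2k}^n \int_{\alpha}^1 t^{n-2k} e^{\beta t}dt.
\end{equation}
Thus the whole proposition follows once I show that for any nonnegative integer $m$ and $\beta \neq 0$,
\begin{equation}
\int_{\alpha}^1 t^{m} e^{\beta t}dt = IEP(\beta, m,1)-IEP(\beta, m,\alpha).
\end{equation}

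By the fundamental theorem of calculus this amounts to verifying that $t \mapsto IEP(\beta,m,t)$ is an antiderivative of $t^m e^{\beta t}$, that is $\frac{d}{dt} IEP(\beta,m,t) = t^m e^{\beta t}$. I would establish this by direct differentiation. Writing $a_i = \frac{m!}{\beta^{m+1-i} i!}(-1)^{m-i}$, the product rule gives $\frac{d}{dt} IEP(\beta,m,t) = e^{\beta t}\left[ \sum_{i=0}^{m} \beta a_i t^i + \sum_{i=1}^{m} i\, a_i t^{i-1} \right]$. After reindexing the second sum ($j = i-1$), the coefficient of $t^i$ for $0 \le i \le m-1$ becomes $\beta a_i + (i+1)a_{i+1}$, and a short computation shows $\beta a_i = \frac{m!}{\beta^{m-i} i!}(-1)^{m-i}$ while $(i+1)a_{i+1} = \frac{m!}{\beta^{m-i} i!}(-1)^{m-i-1}$, so these two contributions cancel termwise. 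The only surviving term is $\beta a_m t^m$ from the first sum, and since $\beta a_m = 1$ one obtains exactly $t^m e^{\beta t}$.

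The argument is entirely elementary; the only point requiring care is the telescoping cancellation of the intermediate coefficients, which is where the precise powers of $\beta$ and the alternating signs in the definition of $IEP$ must line up exactly. Equivalently, one could derive the same antiderivative by $m$-fold integration by parts, reducing the exponent on $t$ one unit at a time, which makes the factor $m!/(i!)$ and the sign $(-1)^{m-i}$ appear naturally; I would present whichever is cleaner. Assembling the verified integral identity with the power-series expansion above yields (\ref{IntLegendreExp1}) after factoring the $1/2^n$ and the binomial coefficients back out.
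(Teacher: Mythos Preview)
Your proof is correct and follows essentially the same approach as the paper: substitute the explicit power-series form (\ref{LegendrePowerSeries}) of $P_n$ to reduce to $\int t^m e^{\beta t}\,dt$, and identify $IEP(\beta,m,t)$ as its antiderivative. The paper derives this antiderivative by successive integration by parts (which you also mention as the alternative), whereas you verify it by direct differentiation; either way the argument is the same.
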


\begin{proof}
For (\ref{IntLegendreExp1}), by using successively integration by parts formula, we show easily
\begin{equation}
\int  t^n e^{\beta t} dt = e^{\beta t} \sum_{i=0}^n t^i \frac{n!}{\beta^{n+1-i} i!}(-1)^{n-i}.
\end{equation}

We obtain (\ref{IntLegendreExp1}) by using the formula (\ref{LegendrePowerSeries}) for Legendre polynomial.  
\end{proof}

The computation of (\ref{IntLegendreExp}) with (\ref{IntLegendreExp1}) for small values of $n$ does not pose any problems. When $n >> 1$ accuracy and stability issues arise because of serious substractive cancellations in the summation (\ref{IntLegendreExp1}). In section (\ref{subsection:AlternateComputational}), we propose to use Olver algorithm to implement these terms in a stable way with machine precision. 

\subsection{Appendix B}

Here, we provide some analytical formulas for the  cumulants of $\log(\frac{S_T}{K})$, the density function of $\log(S_T)$ and the option pricing in the class of models discussed in section \ref{sec:Numerical experiments}.\\
Let $X$ be a random variable and $\Phi_X$ its characteristic function. We can define an unique continuous function $\Psi_X$ in a neighbourhood of zero 
such that
\begin{equation}
\Psi_X(0) = 0 \,\,\,\, \textrm{and} \,\,\,\, \Phi_X(z) = exp[\Psi_X(z)].
\end{equation}
The function $\Psi_X$ is called the {\it{cumulant generating function}}.
The {\it{cumulants}} of $X$ are defined by

\begin{equation} \label{CumulantsCharacteristics}
c_n(X) = \frac{1}{i^n} \frac{\partial^n \Psi_X}{\partial u^n}(0).
\end{equation}
(see \cite{ContTankov04} for details). We present the cumulants $c_1$, $c_2$ and $c_4$, needed to determine the truncation range in (\ref{TruncationRange}). 

\subsubsection{Black Scholes model} \label{annex:BS}

With $r=0$ in (\ref{BSSDE}), we have
\begin{align}
c_1& =  \log \left( \frac{S_0}{K} \right) - \frac{1}{2}\sigma^2 T\\
c_2& = \sigma^2 T\\
c_4& = 0 \\
\log(S_T) & \sim N( \log(S_0) - \frac{1}{2}\sigma^2 T, \sigma \sqrt{T})\\
V(x,t) & = N \left( \frac{\log(\frac{S_0}{K}) - \frac{1}{2} \sigma^2 T}{ \sigma \sqrt{T}} \right)
\end{align}
where $V(x,t)$ is the analytical digital call price with strike $K$.

\subsubsection{Merton Jump Diffusion Model} \label{annex:Merton}

With $r=0$ in (\ref{MertonSDE}), we have

\begin{align}
c_1& =  T (\tilde{b} + \lambda \mu)\\
c_2& = T( \sigma^2 + \lambda(\mu^2 + \gamma^2)) \\
c_4& = T \lambda (3 \gamma^4 + 6 \mu^2 \gamma^2 + \mu^4) \\
f_{X_T}(x)& = e^{- \lambda T} \sum_{k=0}^{\infty} \frac{ (\lambda T)^k}{k!} 
\frac{1}{\sqrt{ 2 \pi (\sigma^2T + k \gamma^2)}} e^{-\frac{1}{2} \frac{(x - \tilde{b}T - k \mu)^2}{\sigma^2 T + k \gamma^2}}  \label{Mertondensity} \\
V(x,t) & = e^{- \lambda T} \sum_{k=0}^{\infty} \frac{ (\lambda T)^k}{k!} N \left( \frac{ \log(\frac{S_0}{K}) + bT + k \mu }{\sqrt{\sigma^2 T + k \gamma^2}} \right)
\end{align}

with $b = -\frac{1}{2} \sigma^2 - \lambda(e^{\mu + \frac{\gamma^2}{2}} - 1) $, $\tilde{b} = b - \frac{\log(K)}{T}$, $f_{X_T}(x)$ the probability density function 
of $X_T \equiv \log(\frac{S_T}{K})$ and $V(x,t)$ the analytical digital call price with strike $K$.

\subsubsection{Kou Jump Diffusion Model} \label{annex:Kou}

With (\ref{KouSDE}), we have

\begin{align}
c_1& =  T \left(\tilde{b} + \lambda \left( \frac{p}{\eta_1}-\frac{1-p}{\eta_2}  \right) \right)\\
c_2& = T \left( \sigma^2 + 2\lambda \left( \frac{p}{\eta_1^2}+\frac{1-p}{\eta_2^2}  \right)\right) \\
c_4& = 24T \lambda \left( \frac{p}{\eta_1^4}+\frac{1-p}{\eta_2^4}  \right) 
\end{align}

The pricing formula for a European call option is involved and can be found in Theorem 2 \cite{Kou02}. 
\subsubsection{Heston Stochastic Volatility Model} \label{annex:Heston}

With (\ref{HestonSDE}), we have

\begin{align}
c_1& = (1-e^{-\lambda T}) \frac{(\bar{u}-u_0)}{2 \lambda} - \frac{1}{2} \bar{u}T\\
c_2& = \frac{1}{8 \lambda^3} ( \eta T \lambda e^{- \lambda T}(u_0 - \bar{u}) 
(8 \lambda \rho - 4 \eta) + \lambda \rho \eta(1-e^{-\lambda T})(16\bar{u} - 8u_0)\\
&+ 2 \bar{u} \lambda T ( -4 \lambda \eta \rho  + \eta^2 + 4 \lambda^2)
+ 8 \lambda^2 (u_0 - \bar{u})(1-e^{-\lambda T})\\
& + \eta^2 ( (\bar{u} -2 u_0)e^{-2 \lambda T} + \bar{u} (6 e^{- \lambda T} -7) +2u_0 ) )
\end{align}

\section*{Acknowledgements}
We thank the reviewers and the associate editor for their constructive remarks to
improve the quality of this paper.
The authors would like also to thank C.W Oosterlee (Delft University of Technology)  and C. Necula (University of Zurich) for helpful comments.

\newpage

\bibliography{mybibliography}
\bibliographystyle{plain}

\end{document}